\title{A Gravitational Theory of the Quantum}
\author{T.N.Palmer\\ Department of Physics, University of Oxford, UK\\
tim.palmer@physics.ox.ac.uk}
\date{\today}                                        
\newcommand\be{\@ifstar{\[}{\begin{equation}}}
\newcommand\ee{\@ifstar{\]}{\end{equation}}}
\newcommand\bp{\begin{pmatrix}}
\newcommand\ep{\end{pmatrix}}
\begin{document}
\bibliographystyle{plain}
\maketitle
\begin{abstract}
The synthesis of quantum and gravitational physics is sought through a finite, realistic, locally causal theory where gravity plays a vital role not only during decoherent measurement but also during non-decoherent unitary evolution. Invariant set theory is built on geometric properties of a compact fractal-like subset $I_U$ of cosmological state space on which the universe is assumed to evolve and from which the laws of physics are assumed to derive. Consistent with the primacy of $I_U$, a non-Euclidean (and hence non-classical) state-space metric $g_p$ is defined, related to the $p$-adic metric of number theory where $p$ is a large but finite Pythagorean prime. Uncertain states on $I_U$ are described using complex Hilbert states, but only if their squared amplitudes are rational and corresponding complex phase angles are rational multiples of $2 \pi$. Such Hilbert states are necessarily $g_p$-distant from states with either irrational squared amplitudes or irrational phase angles. The gappy fractal nature of $I_U$ accounts for quantum complementarity and is characterised numerically by a generic number-theoretic incommensurateness between rational angles and rational cosines of angles. The Bell inequality, whose violation would be inconsistent with local realism, is shown to be $g_p$-distant from all forms of the inequality that are violated in any finite-precision experiment. The delayed-choice paradox is resolved through the computational irreducibility of $I_U$. The Schr\"{o}dinger and Dirac equations describe evolution on $I_U$ in the \emph{singular} limit at $p=\infty$. By contrast, an extension of the Einstein field equations on $I_U$ is proposed which reduces \emph{smoothly} to general relativity as $p \rightarrow \infty$. Novel proposals for the dark universe and the elimination of classical space-time singularities are given and experimental implications outlined. 
\end{abstract}
\newpage
\tableofcontents
\newpage

\newtheorem{definition}{Definition}


\section{Introduction}
\label{intro}

It is widely assumed that the synthesis of quantum and gravitational physics will be achieved within the overarching framework of quantum theory. However, after 60 or more years of intense research it is unclear whether a synthesis is even remotely in sight, raising the question of whether even this basic assumption is correct. Here the problem is turned back to front: instead of seeking a quantum theory of gravity  - frequently shortened to `quantum gravity' - we ask whether the synthesis of quantum and gravitational physics might be achieved through a `gravitational theory of the quantum'.

What does such a phrase mean? For example, it has already been claimed that gravitational processes play a role during the decoherent quantum measurement problem \cite{Diosi:1989} \cite {Penrose:1989}, \cite{Percival:1995}, \cite{Ghirardi}. This would certainly be consistent with the stated aim of this paper. However, by a `gravitational theory of the quantum' is meant something more radical: a theory where gravitational processes also play a vital role in describing non-decoherent unitary evolution of the quantum-theoretic complex Hilbert state. In short, the phrase implies a synthesis of quantum and gravitational physics within the type of overarching causal geometric framework associated with general relativity theory. At first sight this would seem a nonsensical suggestion. For example, the effects of gravity are surely negligible when describing the passage of a photon through an interferometer. In addition, recent experimental tests of the Bell inequalities \cite{Gallicchio:2014} \cite{Shalm} appear to show quite conclusively that quantum physics cannot be underpinned by a theory that is fundamentally deterministic and locally causal (`locally realistic'). On these two counts alone, a `gravitational theory of the quantum' would seem to be a non-starter. 

This paper attempts to argue otherwise though an essentially finite, causal and realistic theory called `invariant set theory' \cite{Palmer:2009a} \cite{Palmer:2014}. Motivated in part by nonlinear dynamical systems theory and in part by number theory, it is proposed that the universe $U$ be considered a nonlinear deterministic system evolving precisely on a fractal invariant set $I_U$ in cosmological state space (more precisely on a measure-zero fractal-like limit cycle). This implies that the (mono-) universe evolves over multiple cosmological epochs on a compact subset of cosmological state space. Locally, this fractal structure is easily described: a single state-space trajectory segment (or history) of the universe at some $j$th fractal iterate is, at the $j+1$th iterate, a helix of $N$ trajectories, where $N$ is a large number whose inverse reflects the weakness of gravity (see Fig \ref{fractal}). In Section \ref{interfere} it is shown that if reality corresponds to an uncertain element of this helix, it can be represented probabilistically by a complex Hilbert vector, requiring both the squared amplitude and the complex phase of the vector to be rational numbers. The `Invariant Set Postulate' is formulated more formally in Section \ref{padic}, in particular that the laws of physics at their most primitive derive from the geometry of $I_U$. With this in mind, in Section \ref{Schrodinger}, the de Broglie relationships are interpreted as manifestations of the way this helical geometry of $I_U$ can determine properties of space-time. As an illustration of the latter, in Section \ref{gravity} a novel approach to distinguishing the present from the future and the past (notoriously unsolvable in conventional theory) is presented. 

The logical consistency of the analysis presented requires state-space to be described by a non-Euclidean metric $g_p$, reflecting the primacy of the geometry of $I_U$. Motivated by $p$-adic number theory, $g_p$ is defined in Section \ref{padic}, where $p=N+1$ is a large Pythagorean prime. In terms of $g_p$, two trajectory segments, one of which lies on $I_U$ and the other not, are necessarily $g_p$-distant, even though these segments may be close relative to the more familiar and intuitive Euclidean metric. The consequences of using $g_p$ as the preferred yardstick of distance in state space are transformational. For example, if a putative counterfactual state is describable by a Hilbert vector whose squared amplitude or complex phase angle is irrational, then necessarily this state is $g_p$-distant from a state lying on $I_U$ whose corresponding Hilbert vector is described by rational squared amplitudes and complex phase angles. Since conventional theories of physics based on $\mathbb R$ and $\mathbb C$ necessarily have a Euclidean state-space metric (since this metric is intrinsic to these fields' definition as completions of $\mathbb Q$), then invariant set theory cannot in particular be considered a classical theory of physics (even though it is realistic and locally causal), nor a reformulation of quantum theory.  

These issues becomes especially pertinent when considering amplitudes parametrised by the cosines of angles. In this paper an elementary number theorem is repeatedly invoked which asserts that if a given angle is a rational multiple of $2\pi$, then its cosine is irrational, with just the eight exceptions $\{0,\pi/3, \pi/2, 2\pi/3, \pi, 4\pi/3, 3\pi/2, 5\pi/3\}$. Invariant set theory is essentially a finite theory, and when an angular variable associated with a Hilbert vector is forced to take an irrational value, the corresponding state is not ontic and cannot lie on $I_U$. Making a distinction between rational and irrational descriptors would lead to an unrealistically fine-tuned theory (destroyed by arbitrarily small-amplitude noise) if the metric of state space was Euclidean. However, it leads to a robust and physically realistic theory when the metric of state space is based on the non-Euclidean $g_p$. In Section \ref{Schrodinger} this number-theoretic incommensurateness is shown to provide novel realistic explanations of two quintessential quantum phenomena: quantum complementarity in the Mach-Zehnder interferometer, and quantum spin in the sequential Stern-Gerlach apparatus. However, the most important application of such incommensurateness is in the Bell Inequality in Section \ref{nogo}. Here it is demonstrated that the Bell inequality (\ref{CHSH}) is the singular and not the smooth limit of a finite-precision form (\ref{CHSHmod}) of the inequality, in the limit where experimental precision/accuracy goes to infinity. In invariant set theory, the form of the Bell Inequality whose violation would be inconsistent with realism and local causality is undefined, and the form of the inequality that it violated experimentally is not even $g_p$-approximately close to the form needed to rule out local realism (\ref{CHSH}) \cite{Palmer:2017b}. A key element in demonstrating this result derives from the fact that experimenters cannot in principle shield their apparatuses from the uncontrollable ubiquitous gravitational waves that fill space-time. Again, this result only makes logical sense if $g_p$ is the preferred metric of state space. As discussed, although invariant set theory is deterministic, realistic and locally causal, because of $g_p$ it cannot be considered a conventional classical hidden-variable theory (whose state space is necessarily Euclidean). 

This reference to the in-principle unshieldable role of gravitational waves provides a partial reason as to why invariant set theory fits more into the mould of a gravitational theory of the quantum than a quantum theory of gravity. However, there is more than this. The whole basis for negating quantum no-go theorems are the fractal gaps in $I_U$, which contain the $g_p$-distant non-ontic states without which the theory would be classical. In Section \ref{gravity} it is argued that the existence of such fractal gaps is consistent with what is described as `information compression' at final space-time `singularities'. Of course one of the key reasons for searching for a synthesis of quantum and gravitational physics is the elimination of such space-time singularities. In Section \ref{gravity} a simple generalisation of the field equations of general relativity is described, consistent with the invariant set postulate. As well as suggesting the means by which space-time singularities might be eliminated, this generalisation also suggests a novel explanation of the dark universe. Although quantum theory arises in the \emph{singular} limit $p=\infty$, general relativity arises in the \emph{smooth} limit $p \rightarrow \infty$, suggesting invariant set theory is closer to causal deterministic general relativity theory than to quantum theory. Since quantum theory has never been found to be in discrepancy with laboratory experiments, it is concluded that experimental validation of invariant set theory will only found in situations where gravity plays an essential role. For example invariant set theory predicts that gravity is will be found to be decoherent during measurement and that there is no such particle as a graviton - indeed that the spins of elementary particle may be no larger than $\hbar$. 

\section{Realistic Interpretation of Complex Hilbert States}
\label{interfere}

The purpose of this Section is to describe how complex Hilbert vectors (and associated tensor products) can be used to provide probabilistic estimates of the uncertain state of some essentially finite system, where probability is associated with simple frequentism. In Section \ref{real} we review the use of real Hilbert vectors to describe states of classical systems under conditions of uncertainty. This is extended in Section \ref{complex} to include complex phase structure and in Section \ref{tensorreal} to include the tensor product. In Section \ref{padic} this analysis will be applied to a probabilistic description of the geometry $I_U$ in cosmological state space. 

\subsection{The Real Hilbert Vector}
\label{real}

Unit vectors in a real Hilbert space are natural mathematical entities to represent probabilistically uncertain states of a classical system. For example, let $|r\rangle$ and $|\cancel r \rangle$ denote a pair of unit orthogonal vectors representing two exclusive types of weather state: $|r\rangle$ denoting a state where it rains somewhere in London, $|\cancel r\rangle$ where it rains nowhere in London. What is the state of weather over London on Saturday week? An estimate $P_r$ of the likelihood of rain can be found by running an ensemble of weather forecasts \cite{Palmer:2006}, comprising $N$ integrations (typically 50 in practice) of a numerical weather forecast model which encodes the nonlinear Navier-Stokes and other relevant equations. The individual members of the ensemble have slightly different starting conditions, each sampled from some probability distribution of initial states, conditioned on available weather observations. As illustrated schematically in Fig 1, the state-space trajectories associated with the individual ensemble members diverge from one another under the influence of ubiquitous fluid dynamical instabilities, eventually grouping into two distinct clusters or regimes, associated, respectively, with cyclonic or anticyclonic pressure patterns over Southern England,
\begin{figure}
\centering
\includegraphics[scale=0.3]{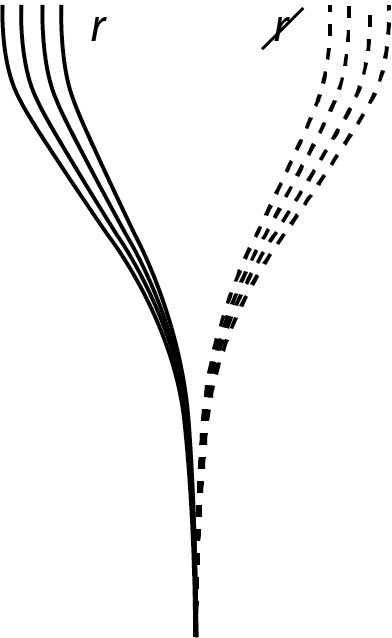}
\caption{\emph{A schematic illustration of the state-space trajectory segments of an ensemble of weather forecasts which start from almost identical initial conditions, some of which evolve to a cyclonic weather pattern associated with a wet-weather state $r$ over London, the rest evolving to a more anti-cyclonic pattern associated with a dry-weather state $\cancel r$ over London.}}
\label{weather}
\end{figure}
The ensemble of $N$ weather forecast trajectories can be represented symbolically by the bit string
\be
\label{string}
S_r=\{a_1 \; a_2 \; a_3 \ldots a_{N}\}
\ee
where $a_i \in \{r, \cancel r\}$ for all $1 \le i \le N$. If $N$ is large enough, the fraction where $a_i=r$, equal to $n_r/N$, is insensitive to $N$. In addition, if the weather model is a sufficiently accurate representation of the underlying equations then one of the forecast trajectories, say the $I$th, can be expected to shadow reality (in meteorological state space) to good accuracy.  Because $|r\rangle$ and $|\cancel r\rangle$ are unit orthogonal vectors, then by Pythagoras's theorem
\be
\label{realhilbert}
|\textrm{London weather}\rangle=\sqrt{\frac{n_r}{N}}\; |r\rangle+\sqrt{\frac{1-n_r}{N}} \;|\cancel r \rangle \nonumber
\ee
is also a unit vector and can represent the uncertain weather state over London on Saturday week, where the square of the amplitudes of the Hilbert vector give the probability of wet and dry weather respectively. Equivalently, (\ref{realhilbert}) represents the uncertain $I$th weather trajectory segment, comprising all weather states from today to Saturday, and labelled symbolically by $r$ or $\cancel r$. Only when $n_r=0$ or $n_r=N$, can we say that the weather state, or the symbolic label of the $I$th trajectory, is certain. Using the parametrisation $0 \le n_r/N=\cos^2 \theta/2 \le 1$ then:
\begin{definition} 
\label{RI1}
The unit Hilbert vector 
\be
\label{realhilbert3}
|\psi(\theta)\rangle=\cos \frac{\theta}{2} \; |a\rangle+  \sin \frac{\theta}{2} \;|\cancel a \rangle
\ee
provides a probabilistic representation of an uncertain $I$th element of $S$. By elementary frequentism, the probability that the $I$th element is equal to `$a$' is equal to $\cos^2 \theta/2 =n_r/N$. Hence it must be that $\cos \theta \in \mathbb Q $. 
\end{definition}
In both classical and quantum theory (based on the fields $\mathbb R$ and $\mathbb Q$, both completions of the rationals $\mathbb Q$), the real Hilbert vector (\ref{realhilbert3}) can correspondingly be extended to be defined for all $\cos \theta \in \mathbb R$. This extension allows one to view (\ref{realhilbert3}) as an element of an algebraically closed vector space and is of considerable analytic convenience. From a physical point of view, even though vectors with irrational values of $\cos \theta$ may not correspond to any physically realisable notion of probability, Hilbert vectors with such irrational squared amplitudes are arbitrarily close to Hilbert vectors with rational squared amplitudes, where, for vectors $x$ and $y$, the notion of `closeness' is defined by the Euclidean distance 
\be
\label{hilbertmetric}
d(x,y)=\sqrt{\langle x-y, x-y \rangle}
\ee
associated with the inner product $\langle x, y \rangle$. Since we require physical theories to be structurally stable, i.e. stable to small perturbations, this extension to irrational squared amplitudes does not and cannot itself introduce new physics and classical physics allows one to treat Hilbert vectors with irrational squared amplitudes as physically meaningful. 

\subsection{The Complex Hilbert Vector} 
\label{complex}

In quantum theory,  the complex Hilbert vector
\be
\label {hilbert4}
|\psi(\theta, \phi)\rangle=\cos \frac{\theta}{2} \; |a\rangle+ e^{i \phi} \sin \frac{\theta}{2} \;|\cancel a \rangle
\ee
 is an element of a complex one-dimensional Hilbert space $\mathscr H^1$, where the inner product $\langle x, y \rangle$ and hence (\ref{hilbertmetric}) is generalised to include a complex phase. Here it is not assumed \emph{a priori} that state space is based on $\mathscr H^1$ (and its extensions). Instead, an extension of Definition \ref{RI1} is sought that would allow complex Hilbert vectors to be interpretable probabilistically in terms of finite frequentism.  

Since the set of unit complex numbers is isomorphic to the rotation group in one dimension, we start by defining complex numbers in terms of elemental cyclic permutations on bit strings $S$. To this end, define an operator $e^{i \pi/2}$ on $S$ (given by (\ref{string}) with $a_i \in \{a, \cancel a\}$) such that
\be
\label{iS}
e^{i\pi/2}S \equiv \{\cancel a_{\frac{N}{4}+1} \;\cancel a_{\frac{N}{4}+2} \ldots \cancel a_{\frac{N}{2}}
\ \ a_1 \;a_2 \ldots a_{\frac{N}{4}}
\ \ \cancel a_{\frac{3N}{4}+1} \; \cancel a_{\frac{3N}{4}+2} \ldots \cancel a_{N}
\ \ a_{\frac{N}{2}+1}\;a_{\frac{N}{2}+2} \ldots a_{\frac{3N}{4}} \} \\ 
\\
\ee
where we have assumed $N$ is divisible by 4 (the geometric and number-theoretic implications of this assumption are discussed below) and if $a_i=a \ (\text{or}\ \cancel a)$, then $\cancel a_i = \cancel a \  (\text{or}\ a)$ respectively. For example, with $N=4$, $e^{i\pi/2} S = \{ \cancel a_2 a_1 \cancel a_4 a_3\}$. (\ref{iS}) is more transparently written as
\be
\label{iii}
e^{i \pi/2}\; S=
\begin{pmatrix}
i & 0 \\
0 & i
\end{pmatrix}
S^T 
\ee
where $S$ is treated as an $N \times 1$ row vector, and $i$ is the $N/2 \times N/2$ matrix
\be
\label{i}
i=
\begin{pmatrix}
0 & -1 \\
1 & 0
\end{pmatrix}
\ee
where $1$ and $-1$ denote the identity and negation operators respectively on blocks of $N/4$ bit-string elements. It is easily verified that 
\begin{align}
e^{i \pi} S\; &= e^{i\pi/2} \circ e^{i\pi/2} \; S = -S \nonumber \\
e^{2\pi i} S &= e^{i \pi} \circ e^{i \pi} S=S
\end{align}
which is to say that $e^{i \pi/2}$ is a `square root of minus one' operator on the bit string $S$. 

Consider now the elemental cyclical permutation operator $\zeta$
\be
\label{zeta}
\zeta \{a_1 \; a_2 \; a_3 \ldots a_{N}\}= \{a_N\; a_1\; a_2 \; a_3 \ldots a_{N-1}\}
\ee
so that $\zeta^{N}S=S$. Then if $S$ takes the particular form
\be
\label{S}
S=\{\; \underbrace{a \; a \ldots a}_{N/2} \ \underbrace{\cancel a\;  \cancel a \ldots \cancel a}_{N/2}\;\}
\ee
it is readily seen that
\be
\zeta^{N/4}\; S = e^{i \pi/2}\; S \nonumber
\ee
whence we can define the `complex exponential operator' $e^{i \phi}$ such that
\be
\label{eiphi}
e^{i \phi}\; S \equiv \zeta^n\;  S
\ee
where  
\be
\label{rational}
\frac{n}{N}=\frac{\phi}{2\pi}
\ee
for integer $n$. Since $N$ is finite, then $\phi$ must be a rational multiple of $2\pi$. Based on this, we generalise Definition \ref{RI1} so that:
\begin{definition} 
\label{RI2}
With $\cos^2 \theta/2 = n_1/N$ and $\phi/2\pi = n_2/N$, the unit complex Hilbert vector 
\be
\label{complexqubit}
|\psi(\theta, \phi)\rangle=\cos \frac{\theta}{2} \; |a\rangle+  e^{i \phi} \sin \frac{\theta}{2} \;|\cancel a \rangle
\ee
provides a probabilistic representation of a specific but unknown $I$th element of the bit string $S(\theta, \phi)=e^{i \phi} S(\theta)$ where
\be
S(\theta)=\{\; \underbrace{a \; a \ldots a}_{N \cos^2 \frac{\theta}{2}} \ \underbrace{\cancel a\;  \cancel a \ldots \cancel a}_{N \sin^2 \frac{\theta}{2}}\; \}
\nonumber
\ee
and where $e^{i \phi}$ is given by (\ref{eiphi}) and (\ref{rational}). Consistent with the Born rule, the probability that some uncertain $I$th element is an $a$ or $\cancel a$ is equal to the relevant squared amplitude of the Hilbert vector. 
\end{definition}
Consider the sphere $\mathbb S^2$ with co-latitude $\theta$ and longitude $\phi$, in relation to the Bloch Sphere of quantum theory. Then $S(\theta, \phi)$ can only be defined at points on $\mathbb S^2$ where $\cos\theta$ and $\phi/2\pi$ are rational. The north and south poles define the sequences $\{aaa \ldots a\}$ and $\{\cancel a \cancel a \cancel a \ldots \cancel a\}$ respectively and the equator contains sequences with equal numbers of symbols `$a$' and `$\cancel a$'. The following number-theoretic result is central to this paper: 
\newtheorem{theorem}{Theorem}
\begin{theorem}
\label{theorem}
 Let $\phi/\pi \in \mathbb{Q}$. Then $\cos \phi \notin \mathbb{Q}$ except when $\cos \phi =0, \pm \frac{1}{2}, \pm 1$. \cite{Niven, Jahnel:2005}
\end{theorem}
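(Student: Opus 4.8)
The plan is to show that $2\cos\phi$ is necessarily a \emph{rational algebraic integer} whenever $\phi$ is a rational multiple of $\pi$, and then to invoke the fact that the only rational algebraic integers lying in $[-2,2]$ are $\{-2,-1,0,1,2\}$. Write $\phi = (m/n)\pi$ with $m,n$ integers; then $n\phi = m\pi$, so that $\cos(n\phi) = (-1)^m = \pm 1$. This single identity is what forces the cosine to satisfy an integer polynomial equation.

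First I would set up the Chebyshev-type recurrence. Put $a_k = 2\cos(k\phi)$. The sum-to-product identity $\cos\bigl((k+1)\phi\bigr) + \cos\bigl((k-1)\phi\bigr) = 2\cos\phi\,\cos(k\phi)$ gives
\[
a_{k+1} = a_1\,a_k - a_{k-1}, \qquad a_0 = 2, \quad a_1 = 2\cos\phi .
\]
A straightforward induction on $k$ then shows that $a_k = P_k(a_1)$ for a \emph{monic} polynomial $P_k \in \mathbb{Z}[x]$ of degree $k$ (these are the monic integer rescalings of the Chebyshev polynomials of the first kind, in the variable $x = 2\cos\phi$).

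Next I would exploit $\cos(n\phi) = \pm 1$. Taking $k = n$ yields $P_n(2\cos\phi) = a_n = \pm 2$, so $x = 2\cos\phi$ is a root of one of the monic integer polynomials $P_n(x) - 2$ or $P_n(x) + 2$. Hence $2\cos\phi$ is an algebraic integer. Suppose now, toward the desired classification, that $\cos\phi \in \mathbb{Q}$. Then $2\cos\phi$ is a rational algebraic integer, and since the only rational algebraic integers are the ordinary integers (by the rational root theorem applied to a monic integer polynomial, equivalently because $\mathbb{Z}$ is integrally closed in $\mathbb{Q}$), we conclude $2\cos\phi \in \mathbb{Z}$. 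The elementary bound $|\cos\phi| \le 1$ then restricts $2\cos\phi$ to $\{-2,-1,0,1,2\}$, i.e. $\cos\phi \in \{-1,-\tfrac12,0,\tfrac12,1\}$.

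Finally I would check that each of these five values is genuinely attained at a rational multiple of $\pi$ (namely $\cos\pi = -1$, $\cos(2\pi/3) = -\tfrac12$, $\cos(\pi/2) = 0$, $\cos(\pi/3) = \tfrac12$, $\cos 0 = 1$), so that the exceptional set is exactly as claimed and no value is spurious. The main obstacle is the one conceptual step that $2\cos\phi$ is an algebraic integer; once the monic integer recurrence $P_k$ is established this is immediate, so the real content lies in verifying the monicity and integrality of $P_k$ by induction. Everything after that—the descent from rational algebraic integer to ordinary integer, and the interval bound—is routine.
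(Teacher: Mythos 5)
Your proof is correct, but it takes a genuinely different route from the paper's. The paper iterates the single doubling identity $2\cos 2\phi = (2\cos\phi)^2 - 2$: writing $2\cos\phi = a/b$ in lowest terms, it shows $a^2-2b^2$ and $b^2$ are again coprime, so unless $b=\pm 1$ the denominators of $2\cos(2^k\phi)$ grow without bound; on the other hand $\phi/\pi = m/n$ confines the angles $2^k\phi$ modulo $2\pi$, and hence the values $2\cos(2^k\phi)$, to a finite set --- a contradiction. You instead build the full monic Chebyshev-type family $P_k\in\mathbb{Z}[x]$ with $P_k(2\cos\phi)=2\cos(k\phi)$, evaluate at $k=n$ where $\cos(n\phi)=\pm 1$, conclude that $2\cos\phi$ is an algebraic integer, and descend to $2\cos\phi\in\mathbb{Z}$ via the rational root theorem. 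The two arguments grow from the same seed (the paper's doubling identity is exactly your $P_2(x)=x^2-2$, and its coprimality lemma is an ad hoc substitute for the integral closedness of $\mathbb{Z}$ in $\mathbb{Q}$), but what they buy differs: the paper's version is entirely self-contained, using nothing beyond coprimality and a finiteness/pigeonhole observation, which suits a physics paper quoting the result in passing; yours isolates the stronger and more reusable fact that $2\cos\phi$ is an algebraic integer whenever $\phi/\pi\in\mathbb{Q}$, which generalizes cleanly (conjugates of $2\cos(2\pi/n)$, analogous statements for $\tan$, connections to cyclotomic fields). Two minor points in your write-up: $P_0=2$ is not monic, which is harmless since you only need monicity of $P_n$ for $n\ge 1$ (the induction starting from $P_1(x)=x$ gives this), but it should be flagged; and you should normalize $n\ge 1$ when writing $\phi=(m/n)\pi$. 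Your closing verification that all five exceptional values are actually attained is a completeness touch the paper omits.
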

\begin{proof} 
Assume that $2\cos \phi = a/b$ where $a, b \in \mathbb{Z}, b \ne 0$ have no common factors.  Since $2\cos 2\phi = (2 \cos \phi)^2-2 \nonumber$ then 
\be
2\cos 2\phi = \frac{a^2-2b^2}{b^2} \nonumber
\ee
Now $a^2-2b^2$ and $b^2$ have no common factors, since if $p$ were a prime number dividing both, then $p|b^2 \implies p|b$ and $p|(a^2-2b^2) \implies p|a$, a contradiction. Hence if $b \ne \pm1$, then the denominators in $2 \cos \phi, 2 \cos 2\phi, 2 \cos 4\phi, 2 \cos 8\phi \dots$ get bigger without limit. On the other hand, if $\phi/\pi=m/n$ where $m, n \in \mathbb{Z}$ have no common factors, then the sequence $(2\cos 2^k \phi)_{k \in \mathbb{N}}$ admits at most $n$ values. Hence we have a contradiction. Hence $b=\pm 1$ and $\cos \phi =0, \pm\frac{1}{2}, \pm1$. 
\end{proof}
This leads to a type of number-theoretic incommensurateness which underpins invariant set theory. Consider the Hilbert vector
\be
\label{position}
|\psi_1\rangle = \frac{1}{\sqrt 2}(|a\rangle + e^{i \phi} |\cancel a\rangle)
\ee
which under the Hadamard transform
\be
\label{Hadamard}
U_H=\frac{1}{\sqrt 2} \bp  1 &1 \\ 1& -1  \ep,
\ee 
is mapped to
\be
\label{momentum}
|\psi_2\rangle= U_H |\psi_1\rangle 
= \cos\frac{\phi}{2}|a\rangle -i \sin\frac{\phi}{2} |\cancel a\rangle
\ee
modulo a global phase. Theorem \ref{theorem} implies that, for a general $\phi$ (i.e. not one of the exceptions in Theorem \ref{theorem}), if $|\psi_1\rangle$ is an uncertain representation of the $I$th element of the corresponding bit string, then $|\psi_2\rangle$ does not and \emph{vice versa}. 

In the discussion above, the operators $e^{i\phi}$ and $e^{i \phi'}$ can be treated as if they were unit complex numbers under multiplication because if both $\phi$ and $\phi'$ are rational multiples of $2 \pi$ then so is $\phi+\phi'$ and $e^{i\phi} e^{i \phi'}= e^{i(\phi+\phi')}$ is defined if $N$ is large enough. However, because of Theorem \ref{theorem} it is not permissible to treat $e^{i\phi}$ and $e^{i \phi'}$ as if they were complex numbers under addition. If they could be so treated then they would satisfy the identity
\be
\label{noadd}
\frac{1}{2}(e^{i \phi}+e^{i \phi'})=e^{i \frac{\phi+\phi'}{2}} \cos \frac{\phi-\phi'}{2}
\ee
However, since $\phi-\phi'$ is a rational multiple of $2\pi$, then $\cos (\phi - \phi')/2$ is (in general) not. That is to say, a complex Hilbert vector of the form
$|a\rangle + (e^{i\phi}+e^{i\phi'})|\cancel a\rangle$,
where $\phi$ and $\phi'$ are rational multiples of $2\pi$, in general does not correspond to a bit string and therefore has no ontic interpretation in the emerging theoretical framework. 

Before concluding, we note that the $N \times N$  `square root of minus one' operator $e^{i\pi/2}$, (\ref{iii}) is easily generalised to include the following $N \times N$ operators
\be
\label{quaternions}
\mathbf E_1=
\begin{pmatrix}
i & 0\\
0 & -i
\end{pmatrix};\ \
\mathbf E_2=
\begin{pmatrix}
0 & i\\
i & 0
\end{pmatrix};\ \ 
\mathbf E_3=
\begin{pmatrix}
0 & -1\\
1 & 0
\end{pmatrix} 
\ee
which satisfy the rules for quaternion multiplication
\be
\mathbf E^2_1=\mathbf E^2_2 = \mathbf E^2_3=\mathbf E_1 \mathbf E_2\mathbf E_3=-\mathbf 1
\ee
and from which the Pauli spin matrices and Dirac matrices are readily defined.

\subsection{The Hilbert Tensor Product}
\label{tensorreal}

Generalising the discussion above, the observed state of the universe can be considered to comprise an uncertain $I$th element of some Cartesian product $S_a \times S_b \times S_c \ldots$. In this Section, attention is focussed on the Cartesian product $S_a \times S_b$ of two correlated $N$-element bit strings. The generalisation to multiple bit strings is simply achieved by induction. 

A general `2-qubit' complex Hilbert state is given by
\be
\label{tensor}
|\psi_{ab}\rangle= \kappa_0 |a\rangle |b \rangle + \kappa_1 e^{i \chi_1} |a\rangle |\cancel{b} \rangle + \kappa_2 e^{i \chi_2}|\cancel{a}\rangle |b \rangle + \kappa_3 e^{i \chi_3}|\cancel{a}\rangle |\cancel{b} \rangle, 
\ee
where $\kappa_i, \chi_i \in \mathbb R$ and $\kappa_0^2+\kappa_1^2+\kappa_2^2+\kappa_3^2=1$. Start by setting the complex phases $\chi_i$ to zero. It is then straightforward to interpret (\ref{tensor}) classically, i.e. as an uncertain selection of some pair of elements $\{a_I, b_I\}$ from the two potentially correlated bit strings
\begin{align}
\label{twobs}
S_a&=\{a_1 \; a_2 \ldots a_{N}\} \nonumber \\
S_b&=\{b_1 \; b_2 \ldots b_{N}\} 
\end{align} 
where $a_i \in \{a, \cancel a\}$, $b_i \in \{b, \cancel b\}$. The correlation between the two bit strings can be represented graphically in two equivalent ways in Fig \ref{tensorfig}. 
\begin{figure}
\centering
\includegraphics[scale=0.4]{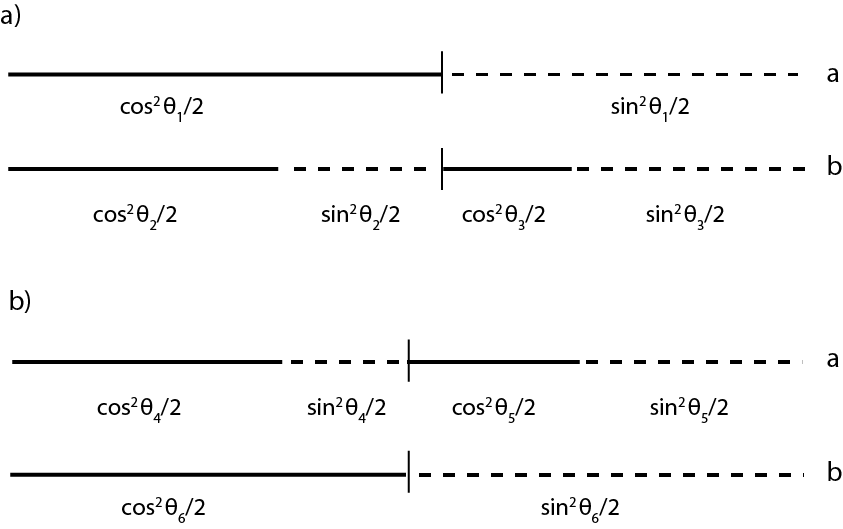}
\caption{\emph{Graphical illustrations of two equivalent definitions of the bit strings $S_a$ (top lines)and $S_b$ (bottom lines) corresponding to a 2-qubit Hilbert state. Solid lines illustrate sub-strings of bits which are either $a\;$s or $b\;$s; dashed lines refer to sub-strings of elements which are either $\cancel a\;$s or $\cancel b\;$s. }}
\label{tensorfig}
\end{figure}
For example, with reference to the first and second lines in Fig \ref{tensorfig}a, the first $N \cos^2 \theta_1/2 \; \cos^2 \theta_2/2$ elements of $S_a \times S_b$ are $(a,b)$s, the next $N \cos^2 \theta_1/2 \; \sin^2 \theta_2/2$ elements are $(a, \cancel b)$s, the next $N \sin^2 \theta_1/2 \; \cos^2 \theta_3/2$ elements are $(\cancel a, b)$s and the final $N\sin^2 \theta_1/2 \; \sin^2 \theta_3/2$ elements are $(\cancel a, \cancel b)$s. The uncertain pair of elements $(a_I, b_I)$ can therefore be represented probabilistically by the real-number tensor product
\be
\label{2qubit1}
|\psi_{ab}\rangle=
\cos \frac{\theta_1}{2}|a\rangle
|\psi_b(\theta_2)\rangle
+ \sin \frac{\theta_1}{2} |\cancel{a}\rangle |\psi_b(\theta_3)\rangle
\ee
where $|\psi_b(\theta_2)\rangle$ and $|\psi_b(\theta_3)\rangle$ are given by (\ref{realhilbert3}) over bit strings of length $N \cos^2 \theta_1/2$ and $N \sin^2 \theta_1/2$ respectively. This is equivalent to (\ref{tensor}) providing 
\be
\label{kappa}
\cos\frac{\theta_1}{2}\cos\frac{\theta_2}{2}=\kappa_0; \ \ \ \ 
\cos\frac{\theta_1}{2} \sin\frac{\theta_2}{2}=\kappa_1; \ \ \ \ 
\sin\frac{\theta_1}{2}\cos\frac{\theta_3}{2}=\kappa_2; \ \ \ \ 
\sin\frac{\theta_1}{2}\sin\frac{\theta_3}{2}=\kappa_3
\ee
The frequency of occurrence of the pairs $(a,b)$, $(a, \cancel b)$, $(\cancel a, b)$ and $(\cancel a, \cancel b)$ is equal to $\kappa^2_0$, $\kappa^2_1$, $\kappa^2_2$ and $\kappa^2_3$ respectively. 

Alternatively, $S_a$ and $S_b$ can be defined with reference to Fig \ref{tensorfig}b leading to the tensor product
\be
\label{2qubit2}
|\psi_{ab}\rangle=
\cos \frac{\theta_6}{2} |\psi_a(\theta_4)\rangle  |b\rangle+ \sin \frac {\theta_6}{2} 
|\psi_a(\theta_5)\rangle |\cancel b\rangle
\ee
which is also equivalent to (\ref{tensor}) providing
\be
\label{firsttosecond}
\cos \frac{\theta_4}{2} \cos \frac{\theta_6}{2} =\kappa_0; \ \ \ \ 
\sin \frac{\theta_4}{2} \cos \frac{\theta_6}{2} = \kappa_2; \ \ \ \ 
\cos \frac{\theta_5}{2} \sin \frac{\theta_6}{2}= \kappa_1; \ \ \ \ 
\sin \frac{\theta_5}{2} \cos \frac{\theta_6}{2}= \kappa_3 \nonumber \\
\ee

In Section \ref{complex} complex phase operators were introducted as cyclic permutation operators on bit strings. Here also the three phase degrees of freedom in the 2-qubit tensor product are introduced through the cyclical permutation operators $\zeta$. Referring first to Fig \ref{tensorfig}a, note that the elements of the two strings $S_a$ and $S_b$ can be cyclically permuted together without affecting the correlations between them. Use a phase angle $\phi_1$ to describe this degree of freedom. Similarly, one can cyclically permute the first $N \cos^2 \theta_1/2$ elements of $S_b$, or the final $N \sin^2 \theta_1/2$ elements of $S_b$, without affecting the overall correlations between $S_a$ and $S_b$. Use the phase angles $\phi_2$ and $\phi_3$ to describe these degrees of freedom, respectively. With this we finally can write:
\begin{definition}
With $\cos \theta_i \in \mathbb Q$, and $\phi_i/2\pi \in \mathbb Q$, $1 \le i \le 6$, the complex Hilbert tensor product 
\begin{align}
\label{2qubit11}
|\psi_{ab}\rangle=&
\cos \frac{\theta_1}{2}|a\rangle
|\psi_b(\theta_2, \phi_2)\rangle
+e^{i\phi_1} \sin \frac{\theta_1}{2} |\cancel{a}\rangle |\psi_b(\theta_3,\phi_3)\rangle \\
=&
\cos \frac{\theta_6}{2} |\psi_a(\theta_4, \phi_4)\rangle  |b\rangle+ e^{i \phi_6}\sin \frac {\theta_6}{2} 
|\psi_a(\theta_5, \phi_5)\rangle |\cancel b\rangle
\end{align}
can, for large enough N, provide a probabilistic representation of a specific but unknown $I$th element of $S_a \times S_b$. 
\end{definition}

\section{Fractal Invariant Sets and State Space Metrics}
\label{padic}

Recall that in $\mathscr{H}^1$ the distance between two vectors $x_{\theta \phi}$ and $x_{\theta' \phi'}$, where $\cos \theta, \phi/2\pi  \in \mathbb Q$ and  $\cos \theta', \phi'/2\pi  \notin \mathbb Q$, can be made as small as one likes, providing $\theta'$ is sufficiently close to $\theta$ and $\phi'$ sufficiently close to $\phi$. This suggests that in terms of the Hilbert Space metric, a physical theory whose states depend ontologically on the conditions  $\cos \theta \in \mathbb Q, \; \phi/2\pi \in \mathbb Q$ above would be considered unacceptably fine tuned, and therefore not structurally robust to small perturbations. This observation might appear to deal a fatal blow to the theory being developed here where states do depend ontologically on the rationality or otherwise of these cosines and phase angles. However, it is possible to approach the problem from a different angle. Consider the set $\mathscr{H}^1_{\text{Rat}}$ of complex Hilbert vectors for which $\cos \theta$ and $\phi/2\pi$ are rational. Since $\mathscr{H}^1_{\text{Rat}}$ is not an algebraically closed inner-product space, then there is no \emph{a priori} necessity to measure distances between vectors using the Hilbert Space inner product. The question then arises: Does there exist an alternate inequivalent metric where a vector with an irrational squared amplitudes is necessarily arbitrarily close to a vector with a rational amplitude squared. Here one can appeal to Ostrowsky's theorem from number theory \cite{Katok}: if there is such a metric it must be somehow related to the $p$-adic metric (where $p$ typically stands for a prime number). In general terms, $p$-adic numbers are to fractal geometry as real numbers are to Euclidean geometry. This suggests that it may be possible to develop a robust theory where Hilbert vectors with rational and irrational squared amplitudes are necessarily distant from one another, based on the primacy of some assumed fractal geometry in state space. The development of a theory based on a fractal dynamically invariant set in state space, with corresponding non-Euclidean metric, is developed below. 

\subsection{The Invariant Set Postulate}
\label{isetp}

We begin with theInvariant Set Postulate \cite{Palmer:2009a} \cite{Palmer:2014}:
\newtheorem* {guess}{Invariant Set Postulate}
\begin{guess}
\label{ISP}
\item[(1)] The universe $U$ can be considered a deterministic dynamical system whose states evolves precisely on some measure-zero invariant set $I_U$ in cosmological state space.  
\item[(2)] The laws of physics, at their most primitive, describe the geometry of $I_U$.
\item[(3)] The laws of physics are structurally stable with respect to perturbations whose amplitude is small relative to an appropriately defined metric which respects the primacy of $I_U$.  
\end{guess}

It is straightforward to define $I_U$ locally. There are two elements to the definition. The first, illustrated in Fig \ref{fractal}, shows in some three dimensional subset of state space, a single state-space trajectory segment (otherwise known as a `history') at some $j$th level of fractal iteration. At the $j+1$th iterate this trajectory is found to be a helical family of $N$ finer-scale trajectories winding around the coarse-scale trajectory. Zooming in further to the $j+2$nd fractal iterate (but not illustrated), each finer-scale trajectories would comprise a helix of $N$ yet finer-scale trajectory segments. This suggests $I_U$ comprises a fully fractal (i.e. indefinitely repeating self-similar) geometry and for simplicity one can imagine this to be. However, to emphasise here that invariant set theory can be considered a strictly finite theory, we assume that $I_U$ is actually a limit cycle and that the fractal iterations end at some finite $J \ggg 0$th iterate. Below we will use the word `fractal' to denote a measure-zero finite-$J$ fractal-like limit cycle. 
\begin{figure}
\centering
\includegraphics[scale=0.2]{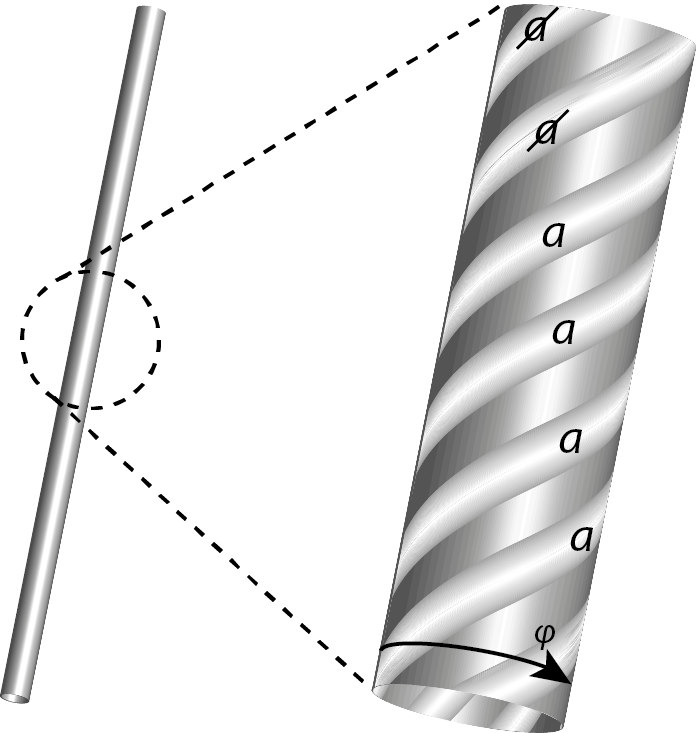}
\caption{\emph{A trajectory segment of $I_U$ at some `coarse' $j$th fractal iterate is found to comprise $N$ `fine-scale' $j+1$th iterate trajectory segments. These trajectory segments do not represent different universes, but are elements of a single trajectory of a mono-universe at different times (and different aeons) as it evolves on the compact set $I_U$. The $j+1$th iterate trajectory segments are wrapped helically around the $j$th iterate trajectory segments, and themselves comprise yet finer-scale $j+2$nd iterate trajectory segments of $I_U$, wrapped in a self-similar fashion. The $j+1$th iterate trajectory segments are each labelled symbolically $a$ or $\cancel a$ according to the regime to which they evolve under the process of decoherence and nonlinear clustering (illustrated in Fig \ref{decohere} below).}}
\label{fractal}
\end{figure}

Using the complex Hilbert vector structure introduced in Section \ref{complex}, the phase angle $\phi = 2\pi n/N$ provides a unique label for the $1 \le n \le N$th trajectory segment, implying an elemental or minimal angle $\delta \phi= 2\pi/N$ between one trajectory segment and its neighbour. In addition in Fig \ref{fractal} each $j+1$th trajectory segment is labelled `$a$' or `$\cancel a$' according to whether it evolves to one of two regimes or clusters - the second element of the definition of $I_U$ - as shown in Fig \ref{decohere}a. Here, the generic divergence of trajectories is associated with the process of decoherence as the system interacts with its environment - in essence it is the classical `butterfly effect'. At some finite amplitude, the trajectory segments begin to cluster into distinct regimes. Some fraction $n/N$ of these trajectories evolve to the `$a$' cluster, the remainder evolving to the `$\cancel a$' cluster. These regimes or clusters correspond to measurement eigenstates in quantum theory and, as discussed in Section \ref{gravity}, plausibly provide a primitive manifestation of the phenomenon of gravity. Since the environment involves a relatively large number of degrees of freedom, and in contradistinction with the helical evolution shown in Fig \ref{fractal}, Fig \ref{decohere}a is to be considered a two-dimensional projection of a process occurring in a relatively large dimensional subset of state space. The $j+2$th iterate trajectories associated with one of these $j+1$th trajectories is then shown in Fig \ref{decohere}a diverging to a second set of regimes labelled `$b$' and `$\cancel b$'. This would correspond to the process of sequential measurement in quantum physics. As shown in Fig \ref{decohere}b, from a fixed coarse-scale $j$th iterate perspective, it appears as if the trajectories are `branching', reminiscent of Everettian dynamics. However, such a perspective is illusory - there is no branching (moreover the $N$ trajectory segments do not correspond to `many worlds' but rather to the mono-universe $U$ at earlier or later epochs.)

\begin{figure}
\centering
\includegraphics[scale=0.3]{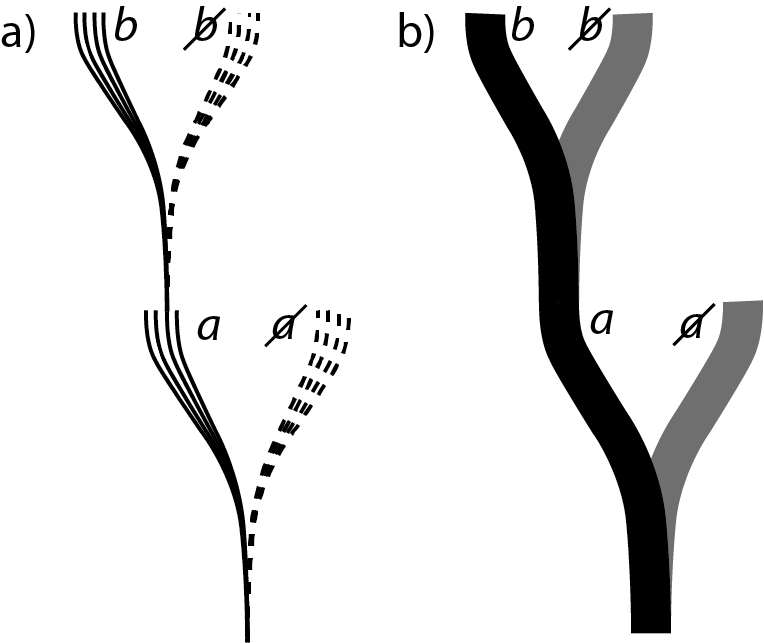}
\caption{\emph{a) $j+1$th iterate trajectories (`histories') on $I_U$ are shown initially diverging in state-space, reflecting the different ways in which a quantum system interacts with its environment,  and then clustering into two discrete regimes: $a$ and $\cancel a$. A $j+1$th iterate trajectory in one of the regimes itself comprises a family of $j+2$nd iterate trajectories evolving to one of a second pair of regimes: $b$ or $\cancel b$. In this way, time can be parametrised by fractal iterate number $j$ - see Section \ref{gravity}. b) From the perspective of the coarser $j$th iterate of $I_U$, this decoherent process resembles the type of `branching' that is often associated with the Everettian interpretation of quantum theory. From the perspective of the coarse-scale $j$th iterate, each branch has a different `weight' according to the corresponding number of $j+1$th trajectories it comprises.}}
\label{decohere}
\end{figure}

\subsection{The State-Space Metric $g_p$}
\label{gp}

Central to invariant set theory's interpretation of quantum physics are the gaps in the fractal structure illustrated in Fig \ref{fractal}; putative (typically counterfactual) states which lie in these gaps have no ontic significance.  Correspondingly, Hilbert vectors with values $\phi \ne 2 \pi n/N$, e.g. irrational values of $\phi$, cannot label uncertain trajectory segments on $I_U$. On the other hand, if $N$ is an inverse measure of the weakness of the gravitational force, which it is proposed to be, then we seek a theory in which $N$ is naturally large (but finite). However, as discussed, this immediately raises the problem of fine tuning: the larger is $N$ the smaller is the gap between the allowed rational values of $\phi$ and the smaller the apparent distance between points on $I_U$ and points off $I_U$. 

However, this conclusion above is implicitly dependent on the metric used to define distances in state space and hence on the notion of `closeness'. As humans from the earliest age, we learn to associate `closeness' in physical space with smallness of Euclidean distance, to the extent that this association is a deeply intuitive notion (and plausibly a key reason why theories of physics are conventionally based on $\mathbb R$ and $\mathbb C$ fields based on the completion of $\mathbb Q$ using the Euclidean metric. It may seem therefore seem reasonable to assume that distances in state space should also be measured using the Euclidean metric. However, the assumed primacy of fractal rather than Euclidean geometry in state space suggests that this may not be appropriate. As mentioned above, there are precisely two inequivalent classes of norm-induced metric on the rationals $\mathbb Q$: the Euclidean metric and the so-called $p$-adic metric, where for algebraic reasons $p$ is generally considered a prime number. By way of introduction to the $p$-adic metric, consider the sequence $\{1, 1.4, 1.41., 1.414, 1.4142, \ldots\}$ where each number is an increasingly accurate rational approximation to $\sqrt 2$. The sequence is Cauchy relative to the Euclidean metric $d_E(a,b)=|a-b|$, $a$, $b \in \mathbb{Q}$. Surprisingly perhaps, the sequence $\{1, 1+2, 1+2+2^2, 1+2+2^2+2^3, \ldots\}$ is also Cauchy, but  with respect to the ($p=2$) $p$-adic metric $d_p(a,b)=|a-b|_p$ where
\be
|x|_p=\left \{%
\begin{array} {ll}
p^{-\textrm{ord}_p x} &\textrm{if } x \ne 0 \\
0 &\textrm{if } x=0
\end{array}%
\right.
\ee
and
\be
\textrm{ord}_p x= \left \{%
\begin{array}{ll}
\textrm{the highest power of \emph p which divides \emph x, if } x \in \mathbb Z \\
\textrm{ord}_p a - \textrm{ord}_p b, \textrm{ if } x=a/b, \ \ a,b \in \mathbb Z, \ b \ne 0
\end{array}%
\right.
\ee
Hence, for example $d_2(1+2+2^2, 1+2)=1/4,\ \ d_2(1+2+2^2+2^3, 1+2+2^2)=1/8$. Just as $\mathbb{R}$ represents the completion of $\mathbb{Q}$ with respect to the Euclidean metric, so the $p$-adic numbers $\mathbb{Q}_p$ represent the completion of $\mathbb{Q}$ with respect to the $p$-adic metric. A general $p$-adic number can be written in the form
\be
\sum_{k=-m}^{\infty} a_k p^k
\ee
where $a_{-m} \ne 0$ and $a_k \in \{0,1,2, \ldots, p-1\}$. The so-called $p$-adic integers $\mathbb Z_p$ are those $p$-adic numbers where $m=0$.  They correspond to numbers $a \in \mathbb Q_p$ where $|a|_p \le 1$. Similarly, for two $p$-adic integers $a$ and $b$, we have $d_p(a,b) \le 1$, whilst if $c \in \mathbb Q_p$ but $c \notin \mathbb Z_p$, then $d_p(a,c) \ge p$ and $d_p(b,c) \ge p$ . 

As stated, the $p$-adic metric seems unintuitive. However, the $p$-adic integers have an important association with fractal geometry. Let $C(2)$ denote the familiar Cantor ternary set. Then the map $F_2: \mathbb Z_2 \rightarrow C(2)$ 
\be
\label{homeo}
F_2: \sum_{k=0}^{\infty} a_k2^k \mapsto \sum_{k=0}^{\infty} \frac{2a_k}{3^{k+1}} \textrm{ where } a_k \in \{0,1\}
\ee  
is a homeomorphism \cite{Robert}, implying that every point of $C(2)$ can be represented by a 2-adic integer. Under this homeomorphism, the 2-adic metric $d_2$ on $\mathbb Z$ is mapped to the Euclidean metric $d_E$ on $C(2)$. It is straightforward (see \cite{Katok} \cite{Robert}) to generalise this homeomorphism to a mapping $F_p$ between $\mathbb Z_p$ and a generalised Cantor set $\hat C(p)$ based on an iterated $(p-1)$-gon with an additional $p$th element at the centre: see Fig \ref{p5fractal} for an illustration of two iterates of $\hat C(5)$. With $p=N+1$, and given $N$ is divisible by 4, then if $p$ is prime it is a Pythagorean prime. In this case $\hat C(p)$ denotes the fractal set of $N$ trajectory segments of $I_U$ and the $N+1$th element at the centre of the $N$-gon. This can be considered to correspond to a single trajectory segment which evolves to the unstable equilibrium between clusters (and is thus labelled neither $a$ nor $\cancel a$). For large enough $N$, the probability that the true $I$th trajectory corresponds to this unstable equilibrium is negligible. From this perspective $I_U$ is locally a Cantor set of trajectory segments, i.e. $\hat C(p) \times \mathbb R$. 

\begin{figure}
\centering
\includegraphics[scale=0.3]{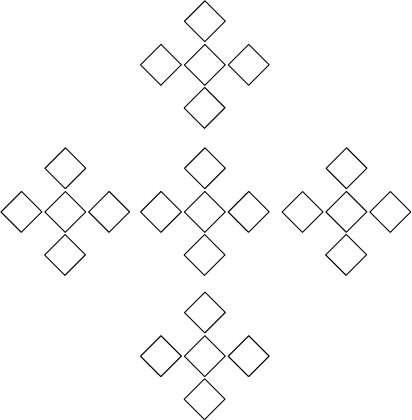}
\caption{\emph{Two fractal iterates associated with a self-similar Cantor Set $\hat C(p)$ associated with the smallest Pythagorean prime $p=5$. Locally, $I_U$ is equal to the set $\hat C(p)$ of trajectory segments, for large Pythagorean $p$. A metric $g_p$ is defined in the text so that a point $ z\notin \hat C(p)$ is necessarily distant from a point $x \in \hat C(p)$, no matter how close $x$ and $z$ are from a Euclidean perspective.}}
\label{p5fractal}
\end{figure}

Motivated by the properties of $d_p$, we define the following metric $g_p$ on cosmological state space. Let $x$, $y$ denote two trajectory segments on $I_U$, i.e. points on $\hat C(p)$, then $g_p(x,y)=d_E(x,y) \le 1$. From the fractal structure of $\hat C(p)$, the possible distances between $x$ and $y$ belong to the `quantised' set $\{1, 1/p, 1/p^2, 1/p^3, \ldots\}$. However, if $x \in \hat C(p)$ and $z \notin \hat C(p)$, then motivated by the fact that the distance between a $p$-adic integer and a $p$-adic non-integer number is at least $p$, we put $g_p(x,z)=p$. Since $p \gg 1$, then necessarily $g_p(x,z) \gg 1$, i.e. $z$ is $g_p$-distant from $x$ no matter how small is $d_E(x,z)>0$. Finally, if neither $x \ne y$ lie in $\hat C(p)$, then we again let $g_p(x, y)=p$. Otherwise, if $x=y$, $g_p(x,y)=0$. It is easily shown that $g_p$ satisfies the rules that define a metric (e.g. it satisfies the triangle inequality). 

There are two important consequences of using $g_p$ as the yardstick of distance in state space - the first technical, the second metaphysical. Firstly, it means that the algebraic properties of $p$-adic integers can be used to frame the laws of physics ($\mathbb Z_p$ is an integral domain). Here one can note that there is an extensive theory of $p$-adic Lie groups within which the Standard Model could conceivably be formulated in terms of the geometry of $I_U$. Secondly, if $z$ lies in a gap in $I_U$ and $x$ and $y$ are points on $I_U$ lying either side of the gap, then the $g_p$-distance between $x$ and $y$ is necessarily much smaller than between either $x$ and $z$, or $y$ and $z$. This means that the following assumption, made in Lewis's seminal treaty on causation \cite{Lewis}:
\begin{quote}
We may say that one world is closer to actuality than another if the first resembles out actual world than the second does.
\end{quote}
is not true (no matter how intuitively reasonable it may seem) if the first world does not lie on the invariant set and the second does. Similarly, if the hypothetical flap of a butterfly's wings takes a trajectory segment $x \in \hat C(p)$ to a trajectory segment $z \notin \hat C(p)$, then this flap cannot be considered a small perturbation, it is a necessarily a $g_p$-large-amplitude perturbation taking a state on $I_U$ to a state far away from $I_U$. This is clearly a non-classical concept - state space distances are taken to be Euclidean in classical theory and the flap of a butterfly's wings is a small perturbation. Hence invariant set theory is not classical. The consequences of these statements play a central role in this paper. Combining the Invariant Set Postulate and Definition (\ref{RI2}):
\begin{definition} 
\label{ISP2}
Let $\theta$, $\phi$ denote two angular degrees of freedom associated with a subset of $I_U$. With $\cos^2 \theta/2 = n_1/N$, $\phi/2\pi = n_2/N$, $0 \le n_1 \le N$, $0 \le n_2 \le N$, then the complex Hilbert vector 
\be
|\psi(\theta, \phi)\rangle=\cos \frac{\theta}{2} \; |a\rangle+  e^{i \phi} \sin \frac{\theta}{2} \;|\cancel a \rangle
\ee
is a representation of some fixed but unknown trajectory segment of $I_U$ from an ensemble of $N \gg 0 $ $j+1$th iterate state-space trajectory segments on $I_U$. The $N$ trajectories can be labelled by the distinct state-space regimes ($a$ and $\cancel a$) to which they evolve. The fraction $n_1/N$ of $j+1$th trajectory segments labelled `$a$' is equal to $\cos^2 \theta/2$, which therefore provides a `weighting' for the corresponding $j$th iterate trajectory. The $n_2$th $j+1$th iterate trajectory can labelled by $\phi=2\pi n_2/ N$. Hence, necessarily, $\cos \theta \in \mathbb Q$ and $\phi/2\pi \in \mathbb Q$. 
\end{definition}

We can embed such realistic complex Hilbert vectors into the complex one dimensional Hilbert Space $\mathscr H$. However, the corresponding Hilbert vectors not associated with rational $\cos \theta$ and $\phi/2\pi$ have no interpretation as uncertain trajectories on $I_U$. Using $g_p$ as the preferred yardstick in state space, such Hilbert vectors are distant from realistic vectors based on rational $\cos \theta$ and $\phi/2\pi$. Such vectors can be imagined to represent hypothetical uncertain counterfactual trajectories which lie off $I_U$ and are therefore far from $I_U$. By virtue of the Invariant Set Postulate, such hypothetical trajectories have a different ontological status to those that lie on $I_U$: they do not arise from the laws of physics and are therefore physically unreal. Since the clusters $a$ and $\cancel a$ are mutually exclusive, the Hilbert states $|a\rangle$ and $|\cancel a\rangle$ will be orthogonal. Since orthogonal vectors are generated as eigenvectors of Hermitian matrices, and since invariant set theory is essentially a finite theory, we will borrow from quantum theory the notion that particular pairs of vectors $|a\rangle$ and $|\cancel a\rangle$ (associated with properties of position, momentum, angular momentum etc) are eigenvectors of the relevant Hermitian matrices.  The discussion above generalises straightforwardly when the corresponding Hermitian matrix has multiple eigenvectors. 

In general, it is not possible to compute whether two points in state space are $g_p$-close or not. In the case where $I_U$ is a true fractal, then its geometric properties are generically non-algorithmic \cite{Blum}. In the case where $I_U$ is a $J$-finite limit cycle, then $I_U$'s geometric properties, whilst algorithmic, are computationally irreducible \cite{Wolfram} implying that they cannot be determined from a subset of $I_U$, e.g. by a computer, no matter how big. This will be relevant in resolving the conceptual problems associated with delayed-choice (and related) experiments in quantum physics. 

\section{The Elements of Quantum Physics}

\label{Schrodinger}

It is now possible to challenge the assumption that $\mathbb R$ and $\mathbb C$ (and associated algebraic/geometric structures) define the arenas within which physical theory, both classical and quantum, is formulated. The discussion starts with the most primitive notion in quantum physics - complementarity - and it is shown how this arises from Theorem \ref{theorem}. As will be seen, the Schr\"{o}dinger (and Dirac) equations emerge as a singular limit \cite{Berry} of evolution on a particular $I_U$, when the fractal parameter $N$ is set equal to infinity. Throughout the role of $g_p$ is critical and encapsulates the fact that invariant set theory is not classical, even though locally realistic. 

\subsection{Quantum Complementarity, EPR and Delayed Choice}
\label{comp}

In invariant set theory, the quantum notion of complementarity arises from a number-theoretic inconsistency between between the two Hilbert vectors
\begin{align}
\label{onetotwo}
|\psi_1\rangle &= \cos\frac{\phi}{2}|a\rangle -i \sin\frac{\phi}{2} |\cancel a\rangle \nonumber \\
|\psi_2\rangle &= \ \ \frac{1}{\sqrt 2}(|a\rangle + e^{i \phi} |\cancel a\rangle) 
\end{align}
related by the Hadamard transform (\ref{Hadamard}). Consider two experimental configurations of a Mach-Zehnder interferometer (Fig \ref{mach}) where the lengths of the arms of the interferometer are for all practical purposes equal, i.e. the phase angle $\phi$ associated with the difference in the length of the arms is equal to zero to experimental accuracy. Hence, if on Monday an experimenter performs a momentum measurement on some particular particle (Fig \ref{mach}a), it is virtually certain that the detector $D_a$ will be triggered, and not the detector $D_{\cancel a}$. However, if on Tuesday she performs a position measurement on the some second particle (Fig \ref{mach}b), there is an equal chance that either detector will be triggered. 

\begin{figure}
\centering
\includegraphics[scale=0.3]{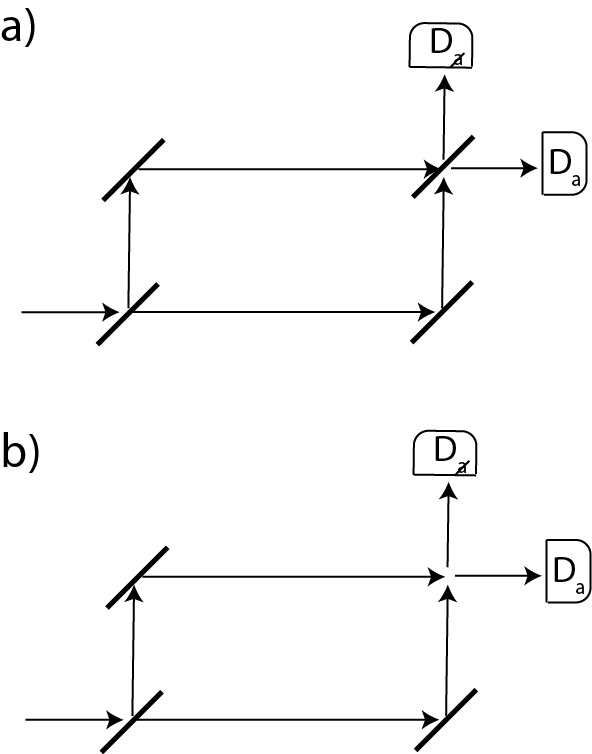}
\caption{\emph{a) On Monday an experimenter performs a momentum measurement using a Mach-Zehnder interferometer. b) On Tuesday she performs a which way (position) measurement removing the second half-silvered mirror. The lengths of the two arms of the interferometer are, to experimental accuracy, equal. We consider the questions: what would the experimenter have measured had she measured the position of Monday's particle and the momentum of Tuesday's particle.}}
\label{mach}
\end{figure}

According to invariant set theory, whatever the exact value $\phi_{\text{Mon}}$ associated with Monday's experiment, $\cos\phi_{\text{Mon}}$ must necessarily be rational. By contrast, whatever the precise value $\phi_{\text{Tues}}$ associated with Tuesday's experiment, $\phi_{\text{Tues}}/2\pi$ must be rational. One simple way to satisfy these two constraints is to assert that $\phi_{\text{Mon}}=\phi_{\text{Tues}}=0^{\circ}$ precisely. However, this is impossible. There are many external affects (the movement of heavy trucks outside the lab and so on) which prevent $\phi$ from being set to some chosen value \emph{precisely}. With care, Alice can shield her apparatus from many of these external effects. However, as a matter of principle she cannot shield her apparatus from the ubiquitous gravitational waves arising e.g. from distant astrophysical events. That is to say, gravity is the one effect that, as a matter of principle, prevents one part of the universe being considered truly isolated from the rest and provides a fundamental reason why $\phi_{\text{Mon}}=\phi_{\text{Tues}}=0^{\circ}$ is impossible. 

Hence, according to Theorem \ref{theorem} we can assert that if the experimenter measured the momentum of a particle on Monday, she would not have been able to infer a well-defined outcome for a (which-way) position measurement of the same particle on Monday. Similarly, she would not have been able to infer a well-defined outcome for a momentum measurement on Tuesday's particle. 

From an experimenter's point of view this result may seem inconsistent with the fact that Tuesday's angle can be made arbitrarily close to Monday's angle by making the experimental apparatus sufficiently precise. However, when considering the implications of quantum no-go theorems below, counterfactual questions of a certain and characteristic type become relevant: What would the experimenter have observed had she measured the position of some particular particle on Monday having actually measured momentum, and/or the momentum of some particular particle on Tuesday, having actually position. The key question, then, is whether the difference between  $\phi_{\text{Mon}} \approx  \phi_{\text{Tues}}$ on the one hand, and $\phi_{\text{Mon}} \approx  \phi_{\text{Tues}}$ on the other, actually matters? More specifically, is there a plausible framework for physical theory where the latter is the singular \cite{Berry} rather than the smooth limit as experimental accuracy/precision goes to infinity. Invariant set theory, and the corresponding metric $g_p$ provides such a singular limiting framework. As such, the fact that the experimenter can make Tuesday's angle arbitrarily close to Monday's angle with respect to the Euclidean metric of space-time, is irrelevant. 

In quantum theory, by the non-commutation of position and momentum operators in Hilbert Space, these counterfactual states cannot be interpreted realistically: a particle with well-defined momentum has no well-defined position and \emph{vice versa}. In invariant set theory, these same conclusions are reached for quite different (number-theoretic) reasons. That is to say, the Hadamard transform $U_H$ applied to $|\psi_1\rangle$ leads to a Hilbert vector which has no realistic interpretation because $e^{i \phi_{\text{Mon}}}$ is undefined for angles $\phi$ where $\cos \phi_{\text{Mon}}$ is rational. In invariant set theory, this `unrealistic' Hilbert state must therefore correspond to a hypothetical uncertain trajectory segment lying in a gap off $I_U$, associated with a putative counterfactual world $U'$ where a position measurement was made on Monday. Although $U'$ can be imagined, because $U' \notin I_U$, it is not an element of physical reality. Using the metric $g_p$ as the yardstick of distance in state space, $U'$ is distant from the real world $U \in I_U$ where momentum was measured and therefore $U'$ cannot be reached by small-amplitude perturbations of states on $I_U$. 

How then can we interpret the Hadamard transformation as a mapping which takes Monday's state into Tuesday's state? Since $\cos\phi_{\text{Mon}}$ and $\phi_{\text{Tues}}/2\pi$ are rational, and $\phi_{\text{Mon}} \ne \phi_{\text{Tues}}$, then, in invariant set theory, the transformation which maps $|\psi_1\rangle$ to $|\psi_2\rangle$ cannot be of the form (\ref{Hadamard}), but must instead be of the form
\be
\label{Hadamardprime}
\bar U_H=\frac{1}{\sqrt 2} \bp 1 &1 \\ 1& -1 \ep  \ \bp \; 1 &0 \\ 0& e^{i \delta \phi} \; \ep
\ee
where $\delta \phi = \phi_{\text{Tues}}-\phi_{\text{Mon}} \ne 0$ precisely, but is $\approx 0$ to experimental accuracy. Of course it does not require any conscious effort on the part of the experimenter to realise such an $\bar U_H$ - it is implicit in the laws of dynamical evolution which ensure that states remain on $I_U$ and hence that $\delta \phi$ is such as to map a $\phi_{\text{Mon}}$ with $\cos \phi_{\text{Mon}} \in \mathbb Q$ to a $\phi_{\text{Tue}}$ with $\phi_{\text{Tue}} \in \mathbb Q$. The behaviour of the system is singular and not smooth in the limit $\delta \phi =0$. Consistent with this, a universe where $\delta \phi=0$ is necessarily $g_p$ distant from a universe $\in I_U$, no matter how small (in the Euclidean sense) is $\delta \phi$. 

To aid understanding of this result, let us imagine a variant of invariant set theory where instead of (\ref{onetotwo}), we have
\begin{align}
|\psi_1\rangle &= \sqrt{F(\phi/2\pi)}|a\rangle -i \sqrt{1-F(\phi/2\pi)} |\cancel a\rangle \nonumber \\
|\psi_2\rangle &= \ \ \frac{1}{\sqrt 2}(|a\rangle + e^{i \phi} |\cancel a\rangle) \nonumber
\end{align} 
where $F$ is a polynomial with rational coefficients, which approximates $\cos^2 \phi/2$ to Euclidean (and hence experimental) accuracy. Hence if $\phi$ is a rational multiple of $2\pi$ then, because of the ring-theoretic properties of polynomials, so is $F(\phi/2\pi)$. In this circumstance, there would be no number theoretic incommensurateness between $|\psi_1\rangle$ and $|\psi_2\rangle$ of the type discussed above. This would lead to the most blatant inconsistency with the laws of probability, of the type described e.g. in \cite{FeynmanHibbs}: if we take a realistic perspective where the particle either travels through the upper or lower arm of the interferometer, then the very notion of probability demands that the probability $P_a$ of being detected by $D_a$ in Fig \ref{mach}a must be the sum $P_1+P_2$ of the probabilities of the particle travelling through the upper arm of the interferometer and the lower arm, respectively, as found from Fig \ref{mach}b. Manifestly, this is not the case. Hence, it we insist that physics is realistic, the squared amplitudes cannot be algebraic functions of angle. As before, such a statement would make no sense if the yardstick of distance in state space were the standard Euclidean metric since the Euclidean distance between polynomial and transcendental functions can be arbitrarily small. 

The analysis above reveals an attractive property of invariant set theory which conventional physical theory (including quantum theory) based on $\mathbb R$ and $\mathbb Q$ does not share. Invariant set theory is fundamentally a finite theory and inconsistencies arise when putative states are considered which violate this finiteness (i.e. where descriptors take irrational rather than rational values). The crucial role of the transcendental nature of the cosine function does not contradict the primacy of finiteness: the transcendence of the cosine function merely guarantees that rational angles and rational cosines of angles are generically incompatible (i.e. no matter how large is $N$) and this in turn ensures that in an entirely realistic framework there can be no inconsistency in describing the quantum state probabilistically, in the usual frequentist sense.

This leads to a new perspective on EPR. Suppose Alice and Bob choose to measure either the momentum or the position of their halves of an entangled particle pair. By the discussion above, if Alice were to choose to measure momentum she could not have measured position, and \emph{vice versa}. Similarly for Bob. Now if Alice measures momentum and Bob position, then conservation of momentum will allow Alice to infer the position of her particle and Bob the momentum of his particle. In this situation, there is no inconsistency in concluding that these particles have well-defined position and momentum, since only one of these properties can be measured directly (the other is inferred from a measurement on the other particle). If both Alice and Bob measure momentum then their measurements will be found to be consistent with conservation of momentum, and neither could have measured position. It does not follow in this latter case that neither particle `has' a position. Instead one must conclude that the positions of these particles (at the time the momentum measurements are being made) play no direct role in the way in which the particles interact with the rest of the universe. That is to say, the positions play no role in determining the geometry of $I_U$. More generally, the number-theoretic incommensurateness discussed above neither contradicts nor demands the realist notion that sub-systems have a full and definite set of properties - though common sense suggests that it is reasonable to assume such sets of properties do exist. It might be thought that such a conclusion is inconsistent with the Kochen-Specker theorem. However, it is known that this theorem (like the ones discussed here) can be nullified by finite-precision experimentation \cite{Meyer:1999}.

Let us now consider the conceptual problem of a delayed-choice experiment, i.e. where the decision on whether to remove the second beam splitter in Fig \ref{mach} is made at a time $t_1>t_0$, where $t_0$ denotes the time the particle has passed through the first beam splitter. It is certainly the case that at $t_0$ the cosine of the phase angle difference $\phi$ must be rational if the experimenter decides at $t_1$ to make a momentum measurement. Is this inconsistent with experimenter free will? Certainly not. As discussed in Section \ref{padic}, the properties of $I_U$ are not computational (either non-algorithmic or computationally irreducible). This means that it is impossible to define a space-time event at $t_0$, such as the ringing of a bell, which can signify an outcome either of an experiment or a physical computation that establishes whether $\cos \phi \in \mathbb Q$ or not. Hence there exists no space-time event at $t_0$ that can cause the decision at $t_1$. Hence there is no violation of experimenter free will. Similarly, there exists no space-time event at $t_0$ to be retrocaused by the experimenter's decision at $t_1$ \cite{Price:1997}. As discussed in \cite{Palmer:2015}, there are parallels with the global geometry of black-hole event horizons in general relativity. All of the problems with causality in quantum physics can be resolved by noting that the metric $g_p$ of state space is either not computational (non-algorithmic or computationally irreducible). The role of non-computability in quantum physics has already been speculated about by Penrose \cite{Penrose:1989}.

\subsection{Spin and the Sequential Stern-Gerlach Experiment}
\label{SG}

Given that the state-space clusters $a$ and $\cancel a$ are mutually exclusive, it is natural to associate, as in quantum theory, the base vectors $|a\rangle$ and $|\cancel a\rangle$ with non-degenerate eigenvectors of some appropriate Hermitian operator (since such eigenvectors are orthogonal). Here we consider clusters based on eigenvectors of the Pauli spin operators $\{\sigma_x, \sigma_y, \sigma_z\}$ and show that a very similar number-theoretic incommensurateness to that discussed in Section \ref{comp} prevent generic simultaneous measurements of spin. 

Let $A$, $B$ and $C$ denote three arbitrary points on the sphere with centre $O$ representing directions $\hat{\mathbf a}$, $\hat{\mathbf b}$ and $\hat{\mathbf c}$ in physical space, respectively (see Fig \ref{spheret}). Let the $\hat{\mathbf{b}}$ direction correspond to the $z$-direction, so that the spin matrix corresponding to the $\hat{\mathbf b}$ direction is $S_b=\sigma_z$
with eigenvectors
\be
|b\rangle=
\begin{pmatrix}
1\\
0
\end{pmatrix}
\ \ \text{and}\ \ 
|\cancel b\rangle =
\begin{pmatrix}
0\\
1
\end{pmatrix}
\ee
Let the great circle between $A$ and $B$ lie in the $Oxz$ plane so that the spin matrix corresponding to the $\hat{\mathbf a}$ direction is given by $S_A= \sigma_z \ \cos \theta_{AB}+ \sigma_x\ \sin \theta_{AB}$ with eigenvectors
\be
|a \rangle=
\begin{pmatrix}
\cos \frac{\theta_{AB}}{2}\\
\sin \frac{\theta_{AB}}{2}
\end{pmatrix}
\ \ \text{and}\ \ |\cancel a\rangle =
\begin{pmatrix}
 \sin \frac{\theta_{AB}}{2}\\
-\cos \frac{\theta_{AB}}{2}
\end{pmatrix}
\ee
so that
\be
|b\rangle
= \cos \frac{\theta_{AB}}{2} |a\rangle + \sin \frac{\theta_{AB}}{2} |\cancel a\rangle
\ee
Finally, by spherical geometry
\be
S_C= \sigma_z \ \cos \theta_{BC} +(\sigma_x \cos \gamma + \sigma_y \sin \gamma) \ \sin\theta_{BC}
= 
\begin{pmatrix}
\cos\theta_{BC}& e^{-i \gamma}\sin \theta_{BC} \\
e^{i \gamma}\sin \theta_{BC}& -\cos\theta_{BC}
\end{pmatrix}
\ee
with eigenvectors
\be
|c\rangle=
\begin{pmatrix}
 \cos \frac{\theta_{BC}}{2}\\
e^{i \gamma} \sin \frac{\theta_{BC}}{2}
\end{pmatrix}
\ \ \text{and}\ \ 
|\cancel c\rangle =
\begin{pmatrix}
e^{-i \gamma} \sin \frac{\theta_{BC}}{2}\\
-\cos \frac{\theta_{BC}}{2}
\end{pmatrix}
\ee
so that
\be
|b\rangle
= \cos \frac{\theta_{BC}}{2} |c\rangle + e^{i \gamma} \sin \frac{\theta_{BC}}{2} |\cancel c\rangle
\ee
Suppose it is the case that: a) $\cos \theta_{AB} \in \mathbb Q$, b) $\cos \theta_{BC} \in \mathbb Q$ and c) $\gamma/ 2\pi \in \mathbb Q$. 
\begin{figure}
\centering
\includegraphics[scale=0.3]{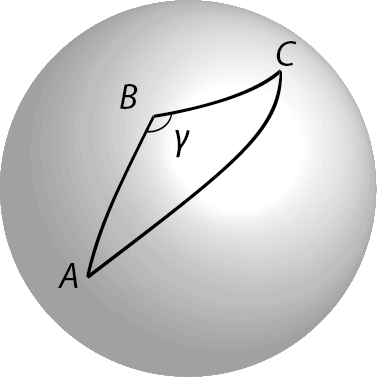}
\caption{\emph{Number-theoretic properties of the spherical triangle $\triangle ABC$ play a central role in interpreting quantum `weirdness' realistically, in invariant set theory}}
\label{spheret}
\end{figure}
Now consider a crucial question. Is it possible for $\cos \theta_{AC} \in \mathbb Q$? By  the cosine rule for spherical triangles
\be
\cos \theta_{AC}= \cos \theta_{AB}\cos \theta_{BC}+\sin \theta_{AB}\sin \theta_{BC}\cos \gamma
\ee
The right hand side is the sum of two terms. The first is rational since it is the product of two terms each of which, by construction, is rational. The second is the product of three terms the last of which, $\cos \gamma$,  is irrational, except for the eight exceptions listed above. Let us assume $A$, $B$ and $C$ lie approximately on a great circle, ie $\gamma \approx 180 ^\circ$ to experimental accuracy. By the discussion above (ubiquity of unshieldable gravitational waves), it cannot be that $\gamma=180 ^\circ$ precisely. Hence $\cos \gamma$ is irrational. Since $\theta_{AB}$, $\theta_{BC}$ and $\gamma$ are independent degrees of freedom defining the triangle $\triangle ABC$, there is no reason why $\sin \theta_{AB}$ and $\sin \theta_{BC}$ should conspire with $\cos \gamma$ to make the product $\sin \theta_{AB}\sin \theta_{BC}\cos \gamma$ rational. Hence $\cos \theta_{AC}$ is the sum of a rational and an irrational and is therefore irrational. 

As an illustration of the physical relevance of this result, consider an ensemble of spin-1/2 particles prepared by the first of three Stern-Gerlach apparatuses with spins oriented in the direction $\hat{\mathbf {a}}$ in physical 3-space. The particles pass through a second apparatus oriented in the direction $\hat{\mathbf{b}}$. The particles that are output along the spin-up channel of the second apparatus are then passed into a third Stern-Gerlach apparatus oriented in the direction $\hat{\mathbf{c}}$. As above we consider the corresponding points $A$, $B$ and $C$ to lie approximately on a single great circle, consistent with coplanar measurement orientations. However, precise coplanarity is impossible to achieve \emph{precisely} because of the ubiquity of space-time ripples, as in the discussions above. Hence, as in Fig \ref{spheret} we assume that $A$, $B$ and $C$ are the vertices of some non-degenerate triangle $\triangle ABC$. 

According to invariant set theory, all of $\cos \theta_{AB}$, $\cos \theta_{BC}$ and $\cos \gamma$ must be rational for the sequential Stern-Gerlach experiment to lie on $I_U$. If this is the case, then, as discussed above, $\cos \theta_{AC}$ cannot be rational. This means that the counterfactual experiment where the order of the second and third Stern-Gerlach apparatuses is reversed from ABC to ACB cannot lie on $I_U$. Of course in quantum theory, this result is expressed through the non-commutativity of the spin operators. In (the finite) invariant set theory, it arises, again, through number-theoretic incommensurateness associated with the transcendental nature of the cosine function. 

We can of course envisage performing two separate sequential Stern-Gerlach experiments (one on a Monday, the other on a Tuesday, say) where the order of the Stern-Gerlach apparatuses was $ABC$ and $ACB$ respectively. For Monday's experiment, $\cos \theta_{AB}$ and $\cos \theta_{BC}$ are rational, and the angle subtended at $B$ is a rational multiple of $2 \pi$. For Tuesday's experiment, $\cos \theta_{AC}$ and $\cos \theta_{BC}$ are rational, and the angle subtended at $C$ is a rational multiple of $2 \pi$. As before, this would be impossible if the triangle $\triangle ABC$ was \emph{precisely} the same on Monday and Tuesday. However, this will not be the case - background space-time ripples are necessarily different on Tuesday compared with Monday. That is to say, if Monday's triangle is $\triangle ABC$, and Tuesday's triangle is $\triangle A'C'B'$ and $A \approx A'$, $B \approx B'$ and $C \approx C'$ to (any nonzero) experimental accuracy, then the limit $A = A'$, $B = B'$ and $C=C'$ is singular and $g_p$ distant from the experiments on $I_U$.  (The limit can be smooth when any two primed and unprimed points are set equal, but is singular when all three are set equal, reminiscent of Penrose's `impossible' triangle, which is only impossible because of an inappropriate assumption of $2D$ - rather than $3D$ - Euclidean metric.) An almost identical argument will be used to show in Section \ref{CHSH} that no experiment has ever demonstrated that the Bell Inequalities are violated, even approximately!

\subsection{The Schr\"{o}dinger and Dirac Equations}

Consider an isolated particle of mass $m$ moving uniformly along the $x$ axis in physical space and a particle detector located at $x_D$. Following the discussion in that section the helical structure of fine-scale trajectories illustrated in Fig \ref{fractal} are represented by the bit string
\be
\label{sevolution1}
S(t) = \zeta^n S  \equiv e^{-i \omega t} S
\ee
where $t$ parametrises length along a trajectory, $n/N=\omega t / 2\pi$, the exponential represents a bit-string operator and $\omega$ denotes the frequency of the fine-scale trajectories relative to time along the coarse-scale trajectory. As discussed above, the operator $e^{-i\omega t}$ requires $\omega t$ to be a rational multiple of $2\pi$. 

This can be generalised to include variations in the position $x_D$ of the detector relative to the particle. 
\be
\label{sevolution2}
S(x,t) = \zeta^m S  \equiv e^{-i kx_D} S(t)
\ee
Using the Invariant Set Postulate, we can infer:
\newtheorem* {ever}{De Broglie Relations}
\begin{ever}
\label{EP}
\item The energy-momentum $(E,p)$ of the isolated particle is determined by the geometry of $I_U$ and in particular by the temporal frequency $\omega$ of the $j+1$th iterate trajectory segments as they wrap around the corresponding $j$th iterate trajectory segment, and the wavenumber $k$ of phase variations associated with different positions $x_D$ of the detector relative to the particle. That is 
\be
\label{db}
E=\hbar \omega;\ \ p=\hbar k
\ee
implying that $\hbar$ is the constant of nature that links energy-momentum (the source of space-time geometry) to the geometry of $I_U$. 
\end{ever}
Substituting (\ref{db}) into the non-relativistic formula $E=p^2/2m$, consistent with the use of classical descriptions on the coarse-scale trajectories, we have 
\be
\label{Schrodinger1}
\hbar \omega - \frac{\hbar^2 k^2}{2m}=0
\ee
which, in the singular limit $N=\infty$ and dropping the subscript `$D$' can be written in the familiar differential-equation form of the Schr\"{o}dinger equation:
\be
\label{Schrodinger2}
\left(i \frac{\partial }{\partial t} + \frac{\hbar}{2m} \frac{\partial^2 }{\partial x^2}\right) |\psi\rangle= 0
\ee
with integral solution
\be
\label{intsol}
|\psi(x,t)\rangle= e^{-iEt/\hbar} \;|\psi(x,0)\rangle;\ \ \ \
|\psi(x,0)\rangle= e^{ipx/\hbar} \ |\psi(0,0)\rangle
\ee
and where the complex exponentials are the familiar units of the algebraically closed field $\mathbb C$. The finiteness of $N$ implies that space-time itself must be finite and hence granular. 

It is straightforward to generalise the bit-string construction to describe a particle in a potential well $V(x)$. This is achieved by `continuing' the exponential permuation/negation operator so that for $x\ge 0$, 
\be
e^{-x} S = \{\; \underbrace{a \; a \ldots a}_{\frac{N}{2} (1+e^{-x})} \ \underbrace{\cancel a\;  \cancel a \ldots \cancel a}_{\frac{N}{2} (1-e^{-x})} \}= \sqrt{\frac{1+e^{-x}}{2}}|a\rangle + \sqrt{\frac{1-e^{-x}}{2}}|\cancel a\rangle 
\ee
where the first exponential is a bit-string operator, and is the usual exponential function elsewhere. As before, this requires $e^{x}$ to be rational which means $x \ne 0$ cannot itself be rational \cite{Niven}. This leads to the familiar quantum tunnelling effect. 

Since a key objective is the development of a theory of quantum physics which is compatible with general relativity, invariant set theory must certainly be shown to be compatible with special relativity. 
\begin{figure}
\begin{center}
\includegraphics[scale=0.2]{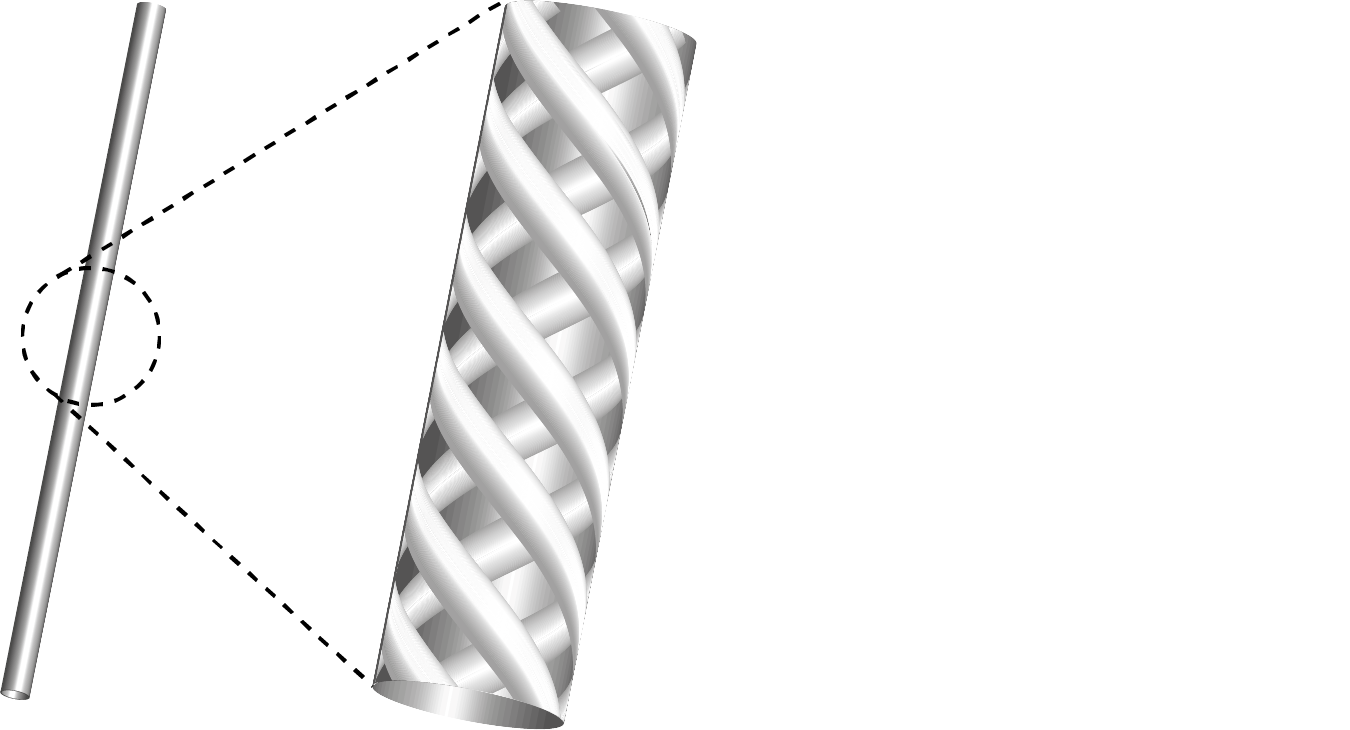}
\end{center}
\caption{\emph{Fractal trajectory structure that incorporates the relativistic positive/negative frequency splitting of Dirac theory.}}
\label{fracdirac}
\end{figure}
As is well known, relativistic invariance requires us to consider positive and negative frequencies together. This is achieved by extending the fine-scale helix combining $N$ clockwise helical trajectories with $N$ anticlockwise helical trajectories - a double helix (see Fig \ref{fracdirac}). This can be represented as 
\be
\label{evolution1}
S_{*}(t) = \zeta^n S \| \zeta^{-n}S. 
\ee
where $\|$ denotes the concatenation operator and $S_*$ is a $2N$-component bit string, or alternatively as 
\be
\label{evolution4}
S_{*}(t)=
\begin{pmatrix}
\zeta^{n} &0 \\
0& \zeta^{-n}
\end{pmatrix}
S_{*}(0)=
\begin{pmatrix}
e^{-i \omega t} &0 \\
0& e^{i \omega t}
\end{pmatrix}
S_{*}(0)
\ee
Again, only in the \emph{singular} limit at $N=\infty$, can $e^{i \omega t}$ be extended over all complex phases. That is to say, at the singular limit $N=\infty$, $e^{i \omega t}$ can be identified with the familiar complex exponential function and $S_{*}(t)$ considered a solution to the equation
\be
\label{evolution3}
i\gamma_0\; \partial_t \psi - \omega \psi = 0
\ee 
where $\gamma_0$ is the familiar Dirac matrix. With $E=\hbar \omega$ based on Postulate \ref{ISP} and $E=mc^2$ from special relativity, then (\ref{evolution3}) corresponds to the integral solution of the zero-momentum form of the Dirac equation,
\be
\label{Dirac1}
i\hbar \gamma_0\; \partial_t \psi - mc^2 \psi = 0
\ee
To keep this paper to a manageable size, a description of the particle in motion is deferred to another paper. To do this, $I_U$ must be extended to include the quaternionic geometry of $I_U$. As discussed in Section \ref{complex}, (see (\ref{quaternions}), quaternionic negation/permutation operators are readily defined. As is well known, the Lorentz group is locally isomorphic to the quaternionic group. Based on the assumption that the geometry of $I_U$ is primitive, then it would therefore appear that the causal Lorentzian structure of space-time may itself be considered emergent from the fractal geometry of $I_U$.  

This analysis suggests an interpretation of the Schr\"{o}dinger equation as a powerful computational tool, but one whose underpinning Euclidean function spaces (at $p=N=\infty$) are too smooth to be able to distinguish ontological and non-ontological states. An ability to distinguish ontological and non-ontological states becomes crucial in evading so-called quantum no-go theorems below. 

\subsection{Measurement}
\label{measure}

As discussed above, the process of measurement in invariant set theory is described firstly with a linear instability associated with a generic interaction with between the system and its environment, and a subsequent nonlinear clustering into distinct regimes in state space. As discussed above, in the Hilbert state formalism these distinct clusters are represented by the orthogonal eigenvectors of a Hermitian operator. Throughout this paper attention is focussed on cases comprising just two clusters, labelled $a$ and $\cancel a$, but the extension to multiple clusters is straightforward. The unpredictability of quantum measurement arises simply because, at any fractal iterate, the value $I$, of the $I$th trajectory segment corresponding to reality, is uncertain - it can correspond to any of the $N$ trajectories with equal probability. As such this uncertainty is merely a manifestation of the fact we (humans) have no way of knowing which aeon of cosmic evolution we are currently experiencing. `God' knows this, of course, and therefore has no recourse to dice! Put another way, there need be nothing intrinsically stochastic or indeterministic about the laws of physics. 

Consider a quantum system that undergoes repeated measurement. The measurement outcomes define a bit string such as $O=\{a,a,\cancel a, \cancel a, a \ldots\}$ - a string of symbolic labelings of the relevant clusters at different fractal iterates $j$ (e.g. extend the clustering in Fig \ref{decohere}a indefinitely upwards). Hence $O$ defines a measurement frequency in space-time. The string $O$ is conceptually different to the strings defined in Section \ref{interfere} which correspond to neighbouring trajectory segments on $I_U$ from which probability is defined.  The correspondence between probability and frequency is theoretically problematic in conventional physics based on $\mathbb R$ and $\mathbb C$ \cite{Wallace}. However, these concepts are easily related when the geometry underpinning state space is fractal. In particular, the outcomes of repeated measurements can be represented by a point $X \in \hat C(p)$ and hence by the $p$-adic integer $...x_3x_2x_1.$ where $x_i \in \{0,1,2, \ldots p-1\}$. Here $x_i$ defines the element of the $i$th iterate of the Cantor Set $\hat C(p)$ in which $X$ lies. $\hat C(p)$ comes with a natural measure: the Haar Measure. With respect to this measure, the probability that $x_i$ equals any of the digits in $\{0,1,2, \ldots p-1\}$ is equal to $1/p$. The following theorem equates frequency of occurrence to probability: 
\bigskip

$\mathbf{Theorem}$ \cite{Ruban} Let $X$ be a typical element of a Cantor set $\hat C(p)$, i.e. an element drawn randomly with respect to the Haar measure. Then with probability one, the frequency of occurrence of any of the digits $\{0,1,2, \ldots p-1\}$ in the expansion for $X$ is equal to $1/p$. 
\bigskip

Explicit formulae for the divergence and clustering of trajectories in state space are not given in this paper. To do so would require a more explicit and detailed analysis of the global (rather than local) geometry of $I_U$. Nevertheless, some general comments can be made on similarities and differences with existing approaches to the measurement problem:

\begin{itemize}

\item As discussed above, invariant set theory does not presume any fundamental  significance to the superposed Hilbert vector - it is a probabilistic representation of some realistic but uncertain state of the system. Hence, in invariant set theory, like Bohmian theory, no `reduction'  - or `collapse of the wavefunction' - is actually taking place during the measurement (i.e. clustering) procedure. 

\item Indeed one can interpret the quantum potential of Bohmian theory as a coarse-grained Euclidean $L^2$ representation of the fractal invariant set geometry in state space. This coarse-grained Euclidean representation smears over the all-important fractal gaps in $I_U$. This means that the Bohmian quantum potential does not have the fine-grained ontological properties of the invariant set and therefore, as discussed in Section \ref{nogo}, Bohmian theory has to be explicitly nonlocal (making it unattractive for synthesising with a causal theory of gravity). 

\item Objective representations of the measurement procedure have been described using stochastic extensions of the Schr\"{o}dinger equation \cite{Pearle:1976, Ghirardi, Percival:1995}. However, in this framework, conventional stochastic dynamics - based on Euclidean norms - would also destroy the precise `gappy' properties of the invariant set which, as discussed in Section \ref{nogo}, allow it to evade being constrained by Bell inequalities but without nonlocality.

\item As discussed, the process of measurement corresponds to the divergence and clustering of fine-scale trajectories. On the coarse-scale it appears as if the trajectory branches into two with some probabilistic weighting of alternatives (Fig \ref{decohere}b). The latter is the way Everett theory is qualitatively interpreted. However, such branching is illusory - it arises merely from taking too coarse-grain a picture of $I_U$. In addition, as discussed, the neighbouring trajectories on $I_U$ do not represent `many worlds', but are rather representations of a mono-universe at different cosmological epochs. Above all, in invariant set theory, the Schr\"{o}dinger equation is seen as a (singular) approximation to reality. As such the proposal here is very far from Everettian theory, which views quantum theory as complete. 

\item Here it is speculated that state-space clustering is a manifestation of the phenomenon of gravity. This would suggest, consistent with earlier speculations by Penrose \cite{Penrose:1989} and Di\'{o}si \cite{Diosi:1989}), that the cluster centroids are, in some sense, gravitationally distinct. Following Penrose \cite{Penrose:1989}, let $E_G$ denote the gravitational interaction energy associated with space-times $\mathcal M_1$ and $\mathcal M_2$ associated with two trajectories on $I_U$ - loosely speaking, the energy needed to deform the mass distributions $\rho_1$ in $\mathcal M_1$ to the mass distributions $\rho_2$ in $\mathcal M_2$ and definable in Newtonian theory as
\be
E_G= \frac{G}{2} \int  \frac{(\rho_1(x)-\rho_2(x))(\rho_1(y)-\rho_2(y))}{|x-y|} d^3x d^3y
\ee
or in general relativity in the weak-field limit. We will say that these space-times can be considered gravitationally indistinct over the timescale $\tau$ if
\be
\label{EG}
\int_{\tau} E_G(\mathcal M_1, \mathcal M_2) dt < \hbar
\ee
On this basis, elements of the helix of trajectory segments are gravitationally indistinct, but the cluster centroids are not. It is plausible that clustering can start to occur before the condition of gravitational distinctness has been reached, according with experimental experience that a measurement outcome may be determined before the condition for gravitational distinctness has been reached. 

\item If clustering is a manifestation of gravity, then the $1/N$ must somehow be related to the relative weakness of the gravitational force. This suggests a value for $1/N$ on the order of gravitational coupling constant (the squared mass of the electron in Planck units) $O(10^{-43})$. 

\end{itemize}

\subsection {Relation to Classical Theory}

It is of interest to ask in what sense classical theory is a limiting aspect of invariant set theory. If $N$ is kept fixed at its finite value, then as $\hbar \rightarrow 0$, the inertial properties of particles associated with a single trajectory segment in state space become decoupled from the geometry of the neighbouring trajectories. In this limit, the dynamical equations of motion can be written as $\dot X = F[X]$ (rather than, say, $\dot X=F[X, \hbar \nabla_\perp X]$ where $\nabla_\perp$ describes a type of $p$-adic derivative), i.e. the time rate of change of $X$ on a trajectory segment depends only on values of $X$ on that specific trajectory segment. This limit can be considered `quasi-classical' because, with finite $N$, the metric is still non-Euclidean, consistent with the non-classical notion of the primacy of the invariant set geometry. By putting $N=p=\infty$, then finally $g_p$ becomes Euclidean everywhere in state space, and the corresponding theory is classical. This final limit is singular: for any finite $p$, no matter how large, $g_p$ is not Euclidean on state space. Classical theory then arises as the double limit $\hbar \rightarrow 0$, $N=\infty$ of invariant set theory. As discussed above, keeping $\hbar$ fixed but putting $N=\infty$ leads to the Schr\"{o}dinger equation of quantum theory. 

\section{Negating Quantum No-Go Theorems}
\label{nogo}

Below it is concluded, surprisingly perhaps, that no experiment has demonstrated the violation of the Bell Inequality, even approximately. In invariant set theory the Bell inequality is neither satisfied nor violated: it is undefined. 

\subsection{The Bell Inequalities}

Recent experiments (e.g. \cite{Shalm}) have seemingly put beyond doubt the conclusion that the CHSH version 
\begin{equation}
\label{CHSH}
|\text{Corr}(0,0)+\text{Corr}(1,0)+\text{Corr}(0,1)-\text{Corr}(1,1)| \le 2
\end{equation}
of the Bell Inequality is violated robustly for a range of experimental protocols and measurement settings. It is very widely believed that these results show that physical theory cannot be based on Einsteinian notions of realism and local causality. Here, $\text{Corr}(X,Y)$ denotes the correlation between spin measurements (with outcomes `+' and '-'), performed by Alice and Bob on entangled particle pairs produced in the singlet quantum state
\be
|\psi\rangle=\frac{1}{\sqrt 2} (|\hat{\mathbf{X}},+\rangle|\hat{\mathbf{Y}},-\rangle-|\hat{\mathbf{X}},-\rangle|\hat{\mathbf{Y}},+\rangle
\ee
where $\hat{\mathbf{X}}$, $\hat{\mathbf{Y}}$ denote arbitrary measurement orientations and $X=0, 1$ and $Y=0,1$ are pairs of freely-chosen points on Alice and Bob's celestial spheres, respectively, thereby defining four corresponding directions.  Let $\theta_{XY}$ denote the relative orientation between an $X$ point and a $Y$ point. As discussed in Section \ref{tensorreal}, complex Hilbert tensor product states can represent the multi-variate probabilistic elements of trajectory segments on $I_U$ providing squared amplitudes are rational, and phase angles are rational multiples of $2\pi$. Hence, in invariant set theory as in quantum theory, 
\be
\text{Corr}(X,Y)=\langle \psi| (\boldsymbol{\sigma}. \hat{\mathbf{X}})(\boldsymbol{\sigma}. \hat{\mathbf{Y}})| \psi\rangle= -\cos \theta_{XY}
\ee
requiring $\cos \theta_{XY} \in \mathbb Q$. 

Of course, in the \emph{precise} form as written, (\ref{CHSH}) has not been shown to have been violated experimentally. In practice, the four correlations on the left-hand side of (\ref{CHSH}) are each estimated from a separate sub-ensemble of particles with measurements performed at different times and/or spatial locations. Hence, for example, the measurement orientation corresponding to $Y=0$ for the first sub-ensemble cannot correspond to \emph{precisely} the same measurement orientation $Y=0$ for the second sub-ensemble. As before, although Bob can try to minimise the effects of internal or external noise, as a matter of principle he cannot shield his apparatus from the effects of gravitational waves associated for example with distant astrophysical events. Hence, what is actually violated experimentally is not (\ref{CHSH}) but 
\be
\label{CHSHmod}
|\text{Corr}(0,0)+\text{Corr}(1,0')+\text{Corr}(0',1)-\text{Corr}(1',1')| \le 2
\ee
where to experimental accuracy (i.e. with respect to the Euclidean metric of space-time)$0\approx 0'$ and $1 \approx 1' $ for $X$ and $Y$. As before, could the difference between  $0\approx 0' $, $1 \approx 1' $ on the one hand, and $0 = 0'$, $1 = 1'$ on the other, actually matter? Is (\ref{CHSH}) the singular rather than the smooth limit of (\ref{CHSHmod}) as $0' \rightarrow 0$, $1' \rightarrow 1$?

Let us suppose that Alice chooses $X=0$ and Bob chooses $Y=0$ when measuring a particular entangled particle pair. Then, for this measurement, it must be the case that $\cos \theta_{00} \in \mathbb Q$.  Consider the counterfactual question: Given that Alice and Bob in fact chose $X=0$ and $Y=0$ respectively, could Alice and Bob have chosen $X=1$ and $Y=0$ respectively? The three points $X=0$, $Y=0$ and $Y=1$ are shown in Fig \ref{F:CHSH}a. We can assume that the three points are approximately coplanar, i.e. that, to experimental accuracy, the internal angle $\gamma$ in Fig \ref{F:CHSH} is approximately equal to $180^\circ$. To answer this question in the affirmative, it must be that the world $U'$ in which this counterfactual experiment takes place also lies on $I_U$. Hence it must be that $\cos \theta_{10}$ is rational. However, from the cosine rule for spherical triangles, we have 
\be 
\label{cosinerule}
\cos \theta_{10}=\cos \theta_{00} \cos \alpha_X + \sin \theta_{00} \sin \alpha_X \cos \gamma
\ee
where $\alpha_X$ is the angular distance between $X=0$ and $X=1$. Now it is always possible for Alice to send the particle which she has just measured relative to direction $X=0$, back into the measuring apparatus to be again measured in the $X=1$ direction. This second measurement corresponds to a simple (single-qubit) measurement where the state has been prepared in the $X=0$ direction and measured in the  $X=1$. Hence $\cos \alpha_X$ must be rational. Now, we also require the angle $\gamma$ to be a rational multiples of $2 \pi$ (see Section \ref{SG}). However, because of ubiquitous unshieldable gravitational waves, $\gamma \ne 180^\circ$ \emph{precisely}. Hence, as before, $\cos \theta_{01}$ is the sum of two terms, the first a rational and the second the product of three independent terms, the last being irrational. Being independent, these three terms cannot conspire to make their product rational. Hence $\cos \theta_{01}$ is the sum of a rational and an irrational and must therefore be irrational. Hence $U' \notin I_U$ and has no ontic status. Because the counterfactual question cannot be answered in the affirmative, $\text{Corr}(1,0)$ is undefined. This argument is readily generalised to show that it is never the case that all four correlations in (\ref{CHSH}) are definable - at most two of them are definable. 

\begin{figure}
\centering
\includegraphics[scale=0.3]{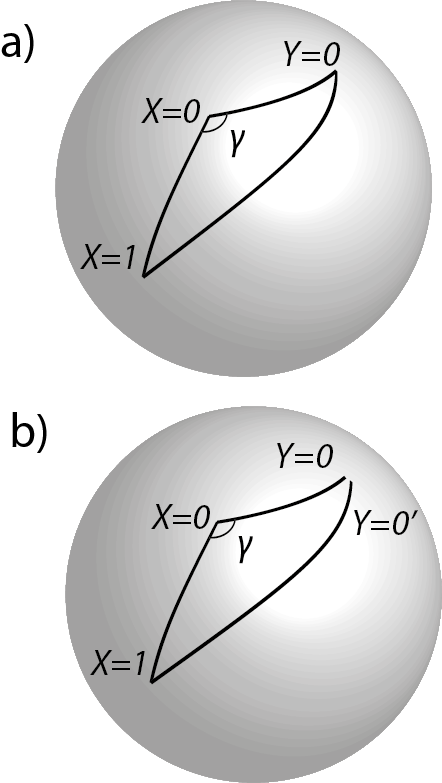}
\caption{\emph{a) In general it is impossible for all the cosines of the angular lengths of all three sides of the spherical triangle to be rational, and the internal angles rational multiples of $2\pi$. This is central to a locally realistic interpretation of the Bell inequality (\ref{CHSH}). b) What actually occurs when (\ref{CHSHmod}) is tested experimentally. It is argued that in a precise sense b) is not close to a).}}
\label{F:CHSH}
\end{figure}

In practice the first two terms on the left-hand side of (\ref{CHSHmod}) are estimated from the quasi-triangle in Fig \ref{F:CHSH}b. Here the angular lengths of all three sides have rational cosines. In a precise sense, the configuration in Fig \ref{F:CHSH}b is not $g_p$-close to Fig \ref{F:CHSH}a. (In this sense, the situation is essentially identical to the resolution of Penrose's impossible triangle.)  In conclusion, invariant set theory predicts that no physical experiment can or will be able to test (\ref{CHSH}). Physical experiments violate (\ref{CHSHmod}) but the orientations are $g_p$ far from the orientations associated with (\ref{CHSH}).  Hence in the context of physical theories with non-Euclidean metrics, experiments have not shown that the Bell inequality is violated. Hence experimental results do not conflict with invariant set theory being both realistic and locally causal. In invariant set theory the left-hand side of (\ref{CHSH}) is undefined, even though in (\ref{CHSHmod} it is well defined. Bell himself provided an argument why the violation of his inequalities was robust to noise \cite{Bell:1964}. However, this noise was defined with respect to a Euclidean state-space metric, and therefore not consistent with the primacy of $I_U$.

The same arguments apply even more straightforwardly to the original Bell inequality
\be
\label{bellorig}
|\text{Corr}(A,B)-\text{Corr}(B,C)|-\text{Corr}(A,C) \le 1
\ee
As before, application of the cosine rule for spherical triangles implies that if any two of the correlations are defined, the third is not. Hence, as before, no experiment can, as a matter of principle, test this inequality. The forms of the inequality that could, in principle, be tested experimentally are not $g_p$-approximately of the form of (\ref{bellorig}). 

In can be noted that the Kochen-Specker theorem can be negated by taking into account the inevitability that all experimentation has finite precision \cite{Meyer:1999}. Through the use of the non-classical state-space metric $g_p$ we have also shown that the Bell Theorem can similarly be negated. However, it is important to note that the violation of (\ref{CHSHmod}) does rule out conventional classical hidden-variable theories. In such theories the yardstick of state space is the Euclidean metric and with respect to this metric (\ref{CHSHmod}) can be made arbitrarily close to (\ref{CHSH}). Hence invariant set theory is not a classical theory, even though it is deterministic and locally causal. Local realism should not be confused with classicality. 

\subsection{GHZ}
\label{GHZ}

The arguments developed above can also be used to interpret the GHZ state \cite{GHZ}
\be
|\psi_{\text{GHZ}}\rangle=\frac{1}{\sqrt 2}(|v_A\rangle|v_B\rangle |v_C\rangle +|h_A\rangle|h_B\rangle |h_C\rangle)
\ee
realistically. Here a polarisation-entangled state comprising three photons is considered, where $v$ and $h$ denote vertical and horizontal polarisation. As before, given this state it is possible to make linear polarisation measurements on any of the three photons at an angle $\phi$ to the $v/h$ axis, providing $\cos^2 \phi$ and hence $\cos 2 \phi$ is rational. The corresponding unitary transformation is
\be
\label{GHZtrans}
\begin{pmatrix}
v' \\
h'
\end{pmatrix}
=
\begin{pmatrix}
\cos \phi & - \sin \phi \\
\sin \phi & \cos \phi
\end{pmatrix}
\begin{pmatrix}
v \\
h
\end{pmatrix}
\ee
It is also possible to make a circular polarisation measurement on the photons, and the corresponding unitary transformation is  
\be
\label{GHZtrans2}
\begin{pmatrix}
L\\
R
\end{pmatrix}
=
\begin{pmatrix}
1 & -i \\
1 & i
\end{pmatrix}
\begin{pmatrix}
v \\
h
\end{pmatrix}\ \ \ \ \ \ \ \ \ \ 
\ee
By considering the case $\phi \approx 45^\circ$ and both linear and circular polarisation possibilities for the photons, it is well known that it is impossible to explain measurement correlations on the GHZ state in a conventional classical local hidden-variable theory. However, it is possible to explain these correlations realistically in invariant set theory. To see this, consider, say, the second photon. Suppose in reality Bob measures this photon relative to the $v'/h'$ basis. Let us ask the counterfactual question: What would Bob have measured had he measured this photon relative to the $L/R$ basis? To answer this question, note from (\ref{GHZtrans}) and (\ref{GHZtrans2})
\begin{align}
\begin{pmatrix}
L\\
R
\end{pmatrix}
=&
\begin{pmatrix}
1 & -i \\
1 & i
\end{pmatrix}
\begin{pmatrix}
\cos \phi & \sin \phi \\
-\sin \phi & \cos \phi
\end{pmatrix}
\begin{pmatrix}
v' \\
h'
\end{pmatrix} \nonumber \\
=&
\begin{pmatrix}
e^{i \phi} & e^{i(\phi-\pi/2)} \\
e^{-i\phi} & e^{-i(\phi-\pi/2)}
\end{pmatrix}
\begin{pmatrix}
v' \\
h'
\end{pmatrix}
\end{align}
We now come to the same crucial point we arrived at in Section \ref{comp}: although $\phi$ may equal $45^{\circ}$ to any nominal accuracy, it cannot equal $45^{\circ}$ precisely because of the uncontrollable and unshieldable nature of ripples in space-time. Hence if $U$ denotes a universe where Bob chose to measure linear polarisation - implying that $\cos  2\phi$ must be rational - then the universe $U'$, identical in every way to $U$ but where Bob measures circular polarisation, cannot lie on $I_U$, because if $\cos  2\phi$ is rational, then $\phi$ cannot be a rational multiple of $2 \pi$. By construction $U'$ is $g_p$ far from $U$. Conversely, if Bob chose to measure circular polarisation, then he could not have chosen linear polarisation. That is to say, the type of counterfactual definiteness assumed in a classical hidden-variable theory is not allowed here. All the arguments about robustness to $g_p$-small noise etc follow \emph{mutatis mutandis}. 

\subsection{PBR}
\label{PBR}

The recent PBR theorem \cite{PBR} is a no-go theorem casting doubt on $\psi$-epistemic theories such as invariant set theory. Here Alice and Bob, by each choosing $0$ or $1$, prepare a quantum system in one of four input states $|\psi_0\rangle |\psi_0\rangle$, $|\psi_0\rangle |\psi_1\rangle$, $|\psi_1\rangle |\psi_0\rangle$ or $|\psi_1\rangle |\psi_1\rangle$ to some quantum circuit, where
\begin{align}
|\psi_0\rangle&=\cos \frac{\theta}{2} |0\rangle + \sin \frac{\theta}{2} |1\rangle \nonumber \\
|\psi_1\rangle&=\cos \frac{\theta}{2} |0\rangle - \sin \frac{\theta}{2} |1\rangle \nonumber
\end{align}
In addition to the parameter $\theta$, the circuit contains two further angles $\alpha$ and $\beta$. The output states of the circuit are characterised as `$\mathrm{Not}\; 00$', `$\mathrm{Not}\; 01$', `$\mathrm{Not}\; 10$' and `$\mathrm{Not}\; 00$', and $\alpha$ and $\beta$ are chosen to ensure that if Alice and Bob's input choices are $\{IJ\}$ where $I,J\in\{0,1\}$, then the probability of `$\mathrm{Not}\; IJ$' is equal to zero. The PBR theorem states that if physics is governed by some underpinning $\psi$-epistemic theory, then at least occasionally the measuring device will be uncertain as to whether, for example, the input state was prepared as $00$ and $01$. On these occasions it is possible that an outcome `$ \mathrm{Not} \; 01$' is observed when the state was prepared as $01$, contrary to quantum theory (and experiment). In invariant set theory, in an experiment on $I_U$ where the input state was $00$ and the outcome $ \mathrm{Not} \; 01$ is observed, then an experiment where the input state was $01$ and the outcome $ \mathrm{Not} \; 01$ is observed, does not lie on $I_U$. A detailed proof of this will be provided elsewhere - however, both technically and conceptually the issue is no different to that already discussed. This is consistent with the fact that the PBR theorem can be negated if a `preparation' independence assumption is violated.

\section{Invariant Set Theory and Gravity}
\label{gravity}

\subsection{Extending General Relativity onto $I_U$}

In all the iconic quantum examples above, the presence of in-principle ubiquitous unshieldable gravitational waves during non-decoherent unitary evolution has been invoked to arrive at a new locally realistic interpretation of quantum physics. This is at least consistent with the notion that invariant set theory may form the basis for a `gravitational theory of the quantum'. However, it is possible to go further. In definition \ref{EP}, the de Broglie relations were interpreted as providing a link between energy-momentum in space time, and the helical geometry of $I_U$. Since energy-momentum is the source for space-time curvature, the definition \ref{EP} effectively describes a link between the fractal geometry of state-space, and the (pseudo-) Riemannian geometry of space time. A key question arises: How should the field equations
\be
\label{gr}
G_{\mu \nu} = \frac{8 \pi G}{c^4} T_{\mu \nu} 
\ee
of general relativity generalise for finite $N$? Based on the discussion above, one can imagine generalising (\ref{gr}) so that the Einstein tensor couples not only to energy-momentum in our space-time, but also to energy-momentum on neighbouring space-times on $I_U$. Consider an isolated self-gravitating system comprising some finite number $M$ of individual particles, then it seems plausible to generalise (\ref{gr}) as 
\be
\label{gr2}
G_{\mu \nu}(\mathcal M_{U_o})= \frac{8 \pi G}{c^4} \int_{U \in I_U} T_{\mu \nu}(\mathcal M_{U}) \Delta(\mathcal M_{U_0}, \mathcal M_{U}) \ d\mu
\ee
where $U_o$ denotes the observed universe in which we live, $\mu$ is the natural Haar measure on $I_U$, and $\Delta$ is a non-singular kernel over the associated $3M$ complex dimensional state space (that is to say $3M$ dimensions of helices), with the property 
\be
\label{normalisation}
Q_M \equiv \int_{U \in I_U} \Delta(\mathcal M_{U_0}, \mathcal M_{U}) d \mu >1
\ee 
such that
\be
\lim_{N \rightarrow \infty}\Delta(\mathcal M_{U_0}, \mathcal M_{U})=\delta(\mathcal M_{U_0}, \mathcal M_{U}) \nonumber
\ee
a Dirac delta function, implying $Q_M \rightarrow 1$ as $N \rightarrow \infty$. It can be noted that (\ref{gr}) is the smooth and not the singular limit of (\ref{gr2}) as $N \rightarrow \infty$, lending further support to the notion of invariant set theory as potentially closer to gravitation theory than to quantum theory. Because the generalisation from (\ref{gr}) to (\ref{gr2}) only directly affects the source term of the Einstein tensor in $\mathcal M_{U_o}$ and not the Einstein tensor itself, this generalisation does not disturb the causal structure in  $\mathcal M_{U_o}$, consistent with the claim that invariant set theory is locally causal. The precise form of $\Delta$ is not defined here. Nevertheless, it is possible to speculate on some of its implications: 
\begin{itemize}

\item If $Q_M$ increases monotonically with $M$, then for large enough $M$ (corresponding, perhaps, to a galaxy), the Einstein tensor in $\mathcal M_{U_o}$ could be substantially influenced by the distribution of energy-momentum in neighbouring space times on $I_U$. If we attempt to rewrite (\ref{gr2}) as
\be
\label{gr3}
G_{\mu \nu} = \frac{8 \pi G}{c^4} (T_{\mu \nu} + \Delta T^{\text{eff}}_{\mu\nu})
\ee
then the additional source term $\Delta T^{\text{eff}}_{\mu\nu}$ for the Einstein tensor in $\mathcal M_{U_o}$ would appear to describe some distribution of `cold dark matter' in $\mathcal M_{U_0}$ itself. 

\item From (\ref{gr2}) one might also expect $G_{\mu\nu}(\mathcal M_{U_o})$ to feel the effects of the ubiquitous divergence of trajectories, c.f. Fig \ref{decohere}, on the largest scales $M \rightarrow \infty$, suggesting the form $\Delta T^{\text{eff}}_{\mu\nu}= \Lambda g_{\mu\nu}$ for (\ref{gr3}) where $\Lambda>0$. From this perspective, it can be understood why quantum vacuum fluctuations may not themselves be a significant source for `dark energy'. In invariant set theory, one can view vacuum fluctuations as representing variations across $j+1$th iterate trajectories of $I_U$, relative to some $j$th iterate trajectory. When computing the energy density associated with the vacuum, one should not integrate up to Planckian frequencies, but only to energies above which the fine-scale trajectories start to decohere and cluster - tiny by comparison. From this perspective, a typical decoherence timescale in the universe could play a vital role in determining the magnitude of dark energy. For example, in the early hot universe where typical decoherent times could be much faster than they are today, the acceleration of the universe would be predicted to be correspondingly larger. This may conceivably be relevant for understanding the origin of cosmic inflation. In the late universe, dominated by state-space velocity convergence (see below), invariant set theory predicts that $\Lambda < 0$.  

\item The finite kernel $\Delta$ may plausibly prevent the occurrence of singularities in gravitational collapse. In neighbouring space-times on $I_U$, the gravitational collapse of some star, for example, would proceed in different ways, each with its own $T_{\mu\nu}$. These differences, when actively coupled to $G_{\mu\nu} (\mathcal M_{U_o})$ through the kernel $\Delta$, may plausibly be sufficient to prevent the Einstein tensor becoming singular. The elimination of such singularities has always been seen to be a key requirement of any putative theory which synthesises quantum and gravitational physics. 

\item In conventional physical theory, time is a parameter describing a foliation of spacelike hypersurfaces in space-time (time goes `up the page'). A notoriously unsolvable problem associated with this perspective is that of defining the notion of `now'. In particular, there is nothing in conventional physical theory that can explain objectively the visceral sensation that the present is somehow different from the past or the future. This problem is potentially solvable using the finite kernel $\Delta$. In invariant set theory, time can be parametrised by the fractal iteration number $j$ (see Fig \ref{decohere}). In this perspective, the passage of time corresponds to a `zooming' into the fractal - one might recall the remarkable animations which zoom into the Mandelbrot set. By way of illustration consider the fractal structure in Fig \ref{p5fractal} and let $\Delta_5$ denote a kernel which averages over one of the five big diamonds. Then the future is illustrated by one of the smaller diamonds (and the future to the future by a yet higher-iterate diamond, within a small-scale diamond, not illustrated). That is to say, in invariant set theory, the future is associated with parts of $I_U$ that are gravitationally indistinguishable. By contrast, the past defines fractal iterates $j$ which have expanded to become much larger than the effective domain of $\Delta$, i.e. in invariant set theory these spaces have become gravitationally independent. For example in Fig \ref{p5fractal}, one can imagine that the structure illustrated is itself one of five elements of some larger diamond, not shown. Through $\Delta$, invariant set theory provides a solution to the problem of what makes `now' unique. 
\end{itemize}

\subsection{Fractal Gaps and Information Compression}

As discussed, the existence of fractal gaps in $I_U$ are crucial for this realistic interpretation of quantum physics. What the implications of such gaps are for gravitation theory? In classical dynamical systems theory, the fractal structure of strange attractors is underpinned by three basic ingredients: instability, nonlinearity and dissipation. Of these, dissipation is a phenomenological and not fundamental process. Dissipation implies that state-space volumes typically decrease through a convergence $\nabla . \bf v <0$ of the state-space velocity field $\bf v$ under dynamical evolution. The question therefore arises - is there a fundamental process in gravitation theory that would imply the existence of regions in state space where $\nabla . \bf v <0$? As discussed by Penrose (see Fig 3.14 of \cite{Penrose:2010} in particular), the black-hole no-hair theorem \cite{Israel} can be interpreted as implying a convergence of the cosmological state-space velocity field. It is enough for $\nabla . \bf v <0$ to only occur in relatively small neighbourhoods of state space for the geometry of $I_U$ to have fractal gaps everywhere. By way of illustration, consider a classical deterministic system whose dynamics are Hamiltonian ($\nabla . \bf v =0$ in state space), apart from in a small state-space volume $V_{\text{conv}}$ where $\nabla . \bf v <0$. Providing trajectories pass through $V_{\text{conv}}$ from time to time, then the system's invariant set can exhibit fractal (i.e. `gappy') structure everywhere, even though the dynamical equations are Hamiltonian in most regions of state space. That is to say, global properties of the invariant set can be inherited from properties localised to sub-regions of state space. This argument highlights the notion that an invariant set is a profoundly holistic and integrative construct in state space and why invariant set theory differs from more reductionist approaches to fundamental physics.  It can be noted that whilst Penrose \cite{Penrose:2010} equates $\nabla . \bf v <0$ with information loss, since the volume of $I_U$ is zero, volumes cannot shrink on $I_U$ and hence information is instead becoming more and more compressed in regions of state space where trajectories are converging. That is to say, in invariant set theory `information' is never actually lost, even though it may seem to be. 

\subsection{Observational Consequences}

It is possible to speculate about some observational consequences of invariant set theory:

\begin{itemize}

\item Invariant set theory predicts that though the state-space clustering process, gravity is an inherently decoherent process. For example, the gravitational field will not itself be capable of being an `entanglement witness'. Experiments are now being designed to study this \cite{Marletto}.   

\item Invariant set theory predicts that since gravity is not a quantum field in space time, there is no such particle as a graviton, consistent with the conclusions of Dyson \cite{Dyson:2004}

\item If the dark universe is associated with a generalisation of general relativity as discussed above, it will not be associated with excitations of quantum fields in space time and therefore won't be associated with supersymmetric particles in particular. As such, it is predicted that no elementary particle exists with spin greater than one.   

\end{itemize}

\section{Conclusions and Further Developments}
\label{conclusions}

In 1976, Roger Penrose wrote \cite{Penrose:1976}:

\begin{quote}
`Despite impressive progress \ldots towards the intended goal of a satisfactory quantum theory of gravity, there remain fundamental problems whose solutions do not appear to be yet in sight. \ldots [I]t has been argued that EinsteinÕs equations should perhaps be replaced by something more compatible with conventional quantum theory. There is also the alternative possibility, which has occasionally been aired, that some of the basic principles of quantum mechanics may need to be called into question.' 
\end{quote}
This quote is as relevant today as it was over 40 years ago. Invariant set theory is a proposal for what this `alternative possibility' might be. Perhaps the most important of quantum theory's basic principles that has been called into question is that the state space of quantum physics is an algebraically closed space (Hilbert Space) based on the continuum field $\mathbb C$. Since $\mathbb C$ is itself the algebraically closed extension of $\mathbb R$, and $\mathbb R$ is the completion of $\mathbb Q$ with respect to the familiar Euclidean metric, then this in turn calls into question whether the Euclidean metric is necessarily the correct physical yardstick in state space. From number theory it is known that there exists a uniquely non-Euclidean metric, the $p$-adic metric, which is to fractal geometry as the Euclidean metric is to Riemannian geometry. Invariant set theory, then, is a theory based on the assumption that the universe is a deterministic locally causal system evolving on a fractal-like geometry $I_U$ in state space, and with an assumed metric $g_p$ (related to the $p$-adic metric) which respects the primacy of $I_U$. Here $p$, a Pythagorean prime whose inverse magnitude reflects the relative weakness of gravity, is a finite parameter describing a helical fractal structure for trajectory segments on $I_U$. Invariant set theory is essentially a finite theory: when descriptors of putative (typically counterfactual) Hilbert states need irrationals for their description, then such states cannot represent elements of physical reality, i.e. cannot lie on $I_U$, and must rather lie in one of the fractal gaps associated with $I_U$. This property is central to a realistic account of quantum complementarity and the non-commutativity of spin operators, and makes invariant set theory an attractive feature compared with conventional theories based on $\mathbb R$ and $\mathbb Q$ (where it is impossible for states with rational and irrational descriptors to be physically distinguishable). Here, as discussed, the transcendental nature of the cosine function plays a critical role. 

As has been shown, quantum theory arises as a singular \cite{Berry} limit of invariant set theory in the non-gravitational limit when $p=N+1$ is set equal to $\infty$. By contrast, Einstein's theory of general relativity can arises smoothly as $p \rightarrow \infty$. This singular limiting behaviour is central to showing that no experiment has demonstrated, or will ever demonstrate, that the Bell Inequalities are violated, even $g_p$-approximately. Hence there is no basis in experiment to reject Einstein's belief that physical theory should be both deterministic and locally causal. Although locally causal and deterministic, invariant set theory is not a classical theory - the fact that $g_p$ is not Euclidean necessarily makes it so. However, at its deepest, invariant set theory departs from conventional theory (both classical and quantum) in being holistic rather than reductionist. In conventional theory, large-scale cosmology is presumed to arise - bottom up -  from the basic building blocks of a putative quantum theory of gravity (electrons, photons, gravitons and so on). Here, by contrast, a description of the smallest quantum building blocks of nature have been sought from the state-space geometry of the largest conceivable gravitationally-bound structure, the state-space geometry $I_U$ of the universe as a whole. Hence it is concluded that the elements of invariant set theory presented here - as much top down as bottom up \cite{Ellis} - may form the basis for a new approach to synthesising quantum and gravitational physics: a gravitational theory of the quantum rather than a quantum theory of gravity. In such a synthesis there will be no such particle as a graviton. 

Clearly further developments are needed to turn invariant set theory into a complete and rigorous theory of quantum and gravitational physics. Firstly, it is necessary to recast quantum field theory in terms of fractal state-space geometry. This raises the question of how the gauge fields of the Standard Model are representable in terms of the helical geometry of trajectories on $I_U$, generalised through the quaternions (\ref{quaternions}) which are known to underpin a description of the gauge field theory. A possible route forward could be through the theory of $p$-adic manifolds as Lie Groups \cite{Schneider}. The second development will be a demonstration of the emergence of the Lorentzian pseudo-Riemannian geometry of space-time from the quaternionic structure of $I_U$. Finally, the global geometry of $I_U$ needs to be defined and analysed. The author hopes to present such developments in due course. 

\bigskip
\bibliography{mybibliography}
\end{document}